\let\proof\relax
\let\endproof\relax
\algnewcommand{\IIf}[1]{\State\algorithmicif\ #1\ \algorithmicthen}
\algnewcommand{\EndIIf}{\unskip\ \algorithmicend\ \algorithmicif}
\newcommand{\nf}[1]{#1}
\newcommand{\blue}[1]{\textcolor{blue}{#1}}
\newcommand{\bound}{vb}
\newcommand{\variables}{\emph{Var}}
\newcommand{\underapproxFunc}{\textsc{UnderApprox}}
\newcounter{magicrownumbers}
\newtheoremstyle{remboldstyle}
  {}{}{\itshape}{}{\bfseries}{.}{.5em}{{\thmname{#1 }}{\thmnumber{#2}}{\thmnote{ (#3)}}}
\theoremstyle{remboldstyle}
\algrenewcommand\ALG@beginalgorithmic{\scriptsize}
\theoremstyle{definition}
\let\c@example\relax
\newtheorem{example}{Example}
\newtheorem{named}{Axiom}
\newcommand{\class}{\textsc{cls}}
\newcommand{\op}[1]{\textbf{\MakeTextUppercase{#1}}}
\newcommand{\always}[2]{\square_{#2}~#1}
\newcommand{\opnext}[2]{\Circle_{#2}~#1}
\newcommand{\eventually}[2]{\lozenge_{#2}~#1}
\newcommand{\once}[2]{\blacklozenge_{#2}~#1}
\newcommand{\until}[3]{#1~\mathcal{U}_{#3}~#2}
\newcommand{\prev}[2]{\CIRCLE_{#2}~#1}
\newcommand{\since}[3]{#1~\mathcal{S}_{#3}~#2}
\newcommand{\tfol}{FOL$^*$\xspace}
\newcommand{\tp}{\textsc{TP}}
\newcommand{\domain}{D}
\newcommand{\domainUnder}{\domain_{\downarrow}}
\newcommand{\reqs}{\textit{Reqs}}
\newcommand{\reqsUnder}{\textit{Reqs}_{\downarrow}}
\newcommand{\translate}{T}
\newcommand{\mainAlg}{\textsc{IBS}\xspace}
\newcommand{\opt}{\text{op}}
\newcommand{\noopt}{\text{nop}}
\newcommand{\naiveAlg}{\textsc{NBS}}
\newcommand{\groundAlg}{G}
\newcommand{\mfotl}{MFOTL\xspace}
\newcommand{\dataDomainConstraint}{T_{data}}
\newcommand{\negp}{\neg P}
\newcommand{\negpf}{\neg P_f}
\newcommand{\groundinput}{\phi_f}
\newcommand{\underinput}{\phi_\downarrow}
\newenvironment{sop}{%
  \proof}{\endproof}
\definecolor{mygray}{rgb}{0.5,0.5,0.5}
\definecolor{mymauve}{rgb}{0.1,0.2,0.7}
\definecolor{olivegreen}{cmyk}{.6,.4,0.8,0}
\algrenewcommand\ALG@beginalgorithmic{\scriptsize}
\newcommand\footnoteref[1]{\protected@xdef\@thefnmark{\ref{#1}}\@footnotemark}
\begin{document}

\title{Early Verification of Legal Compliance via Bounded Satisfiability Checking}
%
%
\author{}
\institute{}
\author{Nick Feng\inst{1} \and Lina Marsso\inst{1} \and Mehrdad Sabetzadeh\inst{2} \and Marsha Chechik\inst{1}}
%
\authorrunning{N. Feng et al.}
%
\institute{University of Toronto, Canada. 
\email{\{fengnick,lmarsso,chechik\}@cs.toronto.edu} \and University of Ottawa, Canada. \email{m.sabetzadeh@uottawa.ca}}
\newpage

\maketitle              
\begin{abstract}
\vspace{-0.3in}
Legal properties involve reasoning about data values and time.
Metric first-order temporal logic (MFOTL) provides a rich formalism for specifying legal properties. While MFOTL has been successfully used for verifying legal properties over operational systems via runtime monitoring, no solution exists for MFOTL-based verification in early-stage system development captured by requirements.
Given a legal property and system requirements, both formalized in MFOTL, the compliance of the property can be verified on the requirements via satisfiability checking. In this paper, we propose a practical, sound, and complete (within a given bound)  satisfiability checking approach for MFOTL.
The approach, based on satisfiability modulo theories (SMT), employs a counterexample-guided strategy to incrementally  search for a satisfying solution. We implemented our approach using the Z3 SMT solver and evaluated it on five case studies spanning the healthcare, business administration, banking and aviation domains.  Our results indicate that our approach can efficiently determine whether legal properties of interest are met, or generate counterexamples that lead to compliance violations.

\end{abstract}
\section{Introduction}
\label{sec:intro}
Software systems, such as medical systems, are increasingly required to comply with laws and regulations aimed at ensuring safety, security, and data privacy~\cite{Arfelt-Basin-Debois-19,Shan-et-al-19}. 
The properties stipulated by these laws and regulations -- which we  refer to as \emph{legal properties} (LP) hereafter -- typically involve reasoning about actions, ordering and time. 
As an example, consider the following LP, $P1$, derived from a health-data regulation (s.11, PHIPA~\cite{phipa-2004}): ``If personal health information is not accurate or not up-to-date, it should not be accessed''. 
In this property, the accuracy and the freshness of the data depend on how and when the data was collected and updated before being accessed. 
Specifically, this property constrains the data action \textit{access} to have accurate and up-to-date data values, which further constrains the order and time of $\textit{access}$ with respect to other data actions.

System compliance with LPs can be checked on the system design 
or on an operational model of a system implementation.
In this paper, we focus on the early stage, where one can check whether a formalization of the system requirements satisfies an LP.  The formalization can be done using a descriptive formalism like temporal logic
~\cite{Rozier-et-al-07,Li-et-al-19}.
For instance, the requirement (req$_0$) of a data collection system: ``no data can be accessed prior to 15 days after the data has been collected''  needs to be formalized for verifying compliance of $P1$. 
It is important to formalize the data and time constraints of both the system requirements and LPs, such as the ones of $P1$ and req$_0$.

\emph{Metric first-order temporal logic (MFOTL)} enables the specification of data and time constraints~\cite{Basin-10} and has an expressive formalism for capturing LPs and the related system requirements that constrain data and time~\cite{Arfelt-Basin-Debois-19}.
Existing work on \mfotl verification focuses on detecting violations at run-time through monitoring~\cite{Arfelt-Basin-Debois-19,DBLP:conf/esorics/HubletBK22}, with \mfotl formulas being checked on execution logs. 
There is an unsatisfied need for determining the \emph{satisfiability} of \mfotl specifications, i.e., looking for LP violations possible in \mfotl specification. 
\nf{This is important for designing system requirements that comply with LPs.}


\mfotl satisfiability checking is generally undecidable since \mfotl is an extension of first-order logic (FOL).
Restrictions are thus necessary for making the problem decidable. 
In this paper, we restrict ourselves to safety properties. 
For safety properties, LP violations are finite sequences of data actions, captured via a finite-length counterexample.
For example, a possible violation of $P1$ is a sequence consisting of storing a value $v$ in a  variable $d$,  updating $d$'s value to $v'$, then reading $d$ again and not obtaining $v'$.
Since we are interested in finite counterexamples,  bounded verification is a natural strategy to pursue for achieving decidability.
SAT solvers have been previously used for bounded satisfiability checking of metric temporal logic (MTL)~\cite{Rozier-et-al-07,Li-et-al-19}.
However, MTL cannot effectively capture quantified data constraints 
in LPs, hence  the solution is not applicable directly.  As an extension to MTL, MFOTL can effectively capture data constraints used in LP.
Yet, to the best of our knowledge, there has not been any  prior work on bounded \mfotl  satisfiability checking. 

To establish a \emph{bound} in bounded verification, researchers have 
predominantly
relied on bounding the \emph{size of the universe}~\cite{Garavel-Graf-13}. 
Bounding the universe would be too restrictive because LPs routinely refer to variables with large ranges, e.g., timed actions spanning several years.
Instead, we bound the \emph{number of data actions in a run}, which bounds the number of actions in the counterexample.  

Equipped with our proposed notion of a bound, we develop an incremental approach ($\mainAlg$) for bounded satisfiability checking of MFOTL.  We first translate the \mfotl property and requirements into first-order logic formulas with quantified relational objects (\tfol).
We then incrementally ground the \tfol constraints to eliminate the quantifiers by considering an increasing number of relational objects.
Subsequently, we check the satisfiability of the resulting constraints using an SMT solver. 
Specifically, we make the following contributions:  (1) we propose a translation of \mfotl formulas to \tfol; (2) we provide a novel bounded satisfiability checking solution, \nf{$\mainAlg$}, for the translated \tfol formulas with incremental and counterexample-guided over/ under-approximation. 
\nf{Note that while our solution to \mfotl satisfibility checking can be applied to a broader domain of applications, in this paper we focus on the legal domain. 
} 
We empirically evaluate  $\mainAlg$ on five case studies with a total of 24 properties showing that it can effectively and efficiently 
find LP violations or prove satisfiability.

The rest of this paper is organized as follows.
Sec.~\ref{sec:background} provides background and establishes our notation.
Sec.~\ref{sec:bsc} defines the bounded satisfiability checking (BSC) problem.
Sec.~\ref{sec:approach} provides an overview of our solution and the translation of \mfotl to \tfol.
Sec.~\ref{sec:incremental} presents our solution, and  proofs of its soundness, termination and optimality are in
Sec.~\ref{ap:SoundTerminateOpt}.
Sec.~\ref{sec:evaluation} reports on the experiments performed to validate our bounded satisfiability checking solution for \mfotl.
Sec.~\ref{sec:relatedwork} discusses related work.
\hbox{Sec.~\ref{sec:conclusion} concludes the paper.}

\section{Preliminaries}
\label{sec:background}
\vspace*{-0.9em}
In this section, we describe metric first-order temporal logic (MFOTL)~\cite{Basin-10}. 

\begin{figure}[t]
    \centering
    \scalebox{.78}{  
        \begin{tabular}{l}
            $P1 = \always{\forall d, v (\textit{Access}(d,v)) \implies \since{( \forall v'( v' \neq v \Rightarrow \neg \textit{Update}(d, v') \wedge \neg \textit{Collect}(d,v')))}{(\textit{Update}(d, v) \vee \textit{Collect}(d, v))}{})}{}$\\
            If a personal health information is not accurate or not up-to-date, it should not be accessed.\\      
            \hline
            $req_0 = \always{\forall d, v  (\textit{Access}(d, v) \implies \once {\exists v' . \textit{Collect(d, v')}}{[360, )}}{}$\\
            No data is allowed to be accessed before the data ID has been collected for at least 15 days (360 hours).\\
            \hline
            $req_1 = \always{\forall d, v (\textit{Update}(d,v) \implies \neg (\once{\exists v' . (\textit{Collect}(d,v') \vee \textit{Update}(d,v'))}{[1, 168]}))}{}$ \\
            Data value can only be updated after having been collected or last updated for more than a week (168 hours).\\\hline
            $req_2 = \always{\forall d,v (\textit{Access}(d,v) \implies \once{\textit{Collect}(d,v) \vee \textit{Update}(d,v)}{[0, 168]})}{}$\\ 
            Data can only be accessed if has been collected or updated within a week (168 hours).\\\hline
            $req_3 = \always{\forall d, v (\textit{Collect}(d, v) \implies \neg (\exists v'' . (\textit{Collect}(d, v'') \wedge v \neq v'') \vee \once{\exists v' . \textit{Collect}(d, v'))}{[1,)})}{}$ No data re-collection.
        \end{tabular}}
    \caption{\small{Example requirements and legal property $P1$ of DCC,  with signature \protect\scriptsize$S_{data} =(\emptyset, \{\textit{Collect}, \; \textit{Update}, \; \textit{Access}\}, \iota_{data})$, {\footnotesize where}  $\iota_{data}(\textit{Collect}) = \iota_{data}(\textit{Update}) = \iota_{data}(\textit{Access}) = 2$.}}
    \label{fig:formulas}
\end{figure}


\begin{figure}[t]
  \centering
    \begin{tabular}{cc}
            \scalebox{.8}{
    \begin{tikzpicture}[scale=2]
      \node at (3.6,1.1) {\small\color{mymauve}data actions};
      \node at (3.85,0.9) {\small\color{mygray}time};
      
      \node at (0.5,1.24) {\small\color{mymauve}$Collect(0,0)$};
      \node at (1.6,1.24) {\small\color{mymauve}$Access (0,0)$};
      
      \node at (0.5,0.92) {\small\color{mygray}$\tau_0 =0$};
      \node at (1.6,0.92) {\small\color{mygray}$\tau_1 =361$};
      
      \node at (0.1,1.0) {$\sigma_1$};

      \draw [thick,black,->] (0.2,1) -- (4.1,1);
      \draw [thick,black,-] (0.2,1.05) -- (0.2,0.95);
      \draw [thick,black,->] (0.5,1) -- (0.5,1.15);
      \draw [thick,black,->] (1.6,1) -- (1.6,1.15);
      
    \end{tikzpicture}
    }
    &
    %
    \multirow{2}{*}{
    \scalebox{.8}{
        \begin{tikzpicture}[scale=2]
          
          \node at (0.5,1.44) {\small\color{mymauve}$Collect_2(0, 1)$,};
          \node at (0.5,1.24) {\small\color{mymauve}$Update_1(0, 0)$};
          \node at (1.6,1.24) {\small\color{mymauve}$Access_1(0, 1)$};
          
          \node at (0.5,0.92) {\small\color{mygray}$\tau_0 =0$};
          \node at (1.6,0.92) {\small\color{mygray}$\tau_1 =2$};
          
          \node at (0.1,1.0) {$\sigma_3$};
    
          \draw [thick,black,->] (0.2,1) -- (2.6,1);
          \draw [thick,black,-] (0.2,1.05) -- (0.2,0.95);
          \draw [thick,black,->] (0.5,1) -- (0.5,1.15);
          \draw [thick,black,->] (1.6,1) -- (1.6,1.15);
          
        \end{tikzpicture}
        }
    }
    \\
    %
    \scalebox{.8}{
    \begin{tikzpicture}[scale=2]
      
      \node at (0.5,1.24) {\small\color{mymauve}$Collect(1,0)$};
      \node at (1.6,1.24) {\small\color{mymauve}$Collect (1,15)$};
      \node at (2.6,1.24) {\small\color{mymauve}$Collect (1, 0)$};
      \node at (3.6,1.24) {\small\color{mymauve}$Access (1, 15)$};
      
      \node at (0.5,0.92) {\small\color{mygray}$\tau_0 =0$};
      \node at (1.6,0.92) {\small\color{mygray}$\tau_1 =384$};
      \node at (2.6,0.92) {\small\color{mygray}$\tau_2 =408$};
      \node at (3.6,0.92) {\small\color{mygray}$\tau_2 =432$};
      
      \node at (0.1,1.0) {$\sigma_2$};

      \draw [thick,black,->] (0.2,1) -- (4.1,1);
      \draw [thick,black,-] (0.2,1.05) -- (0.2,0.95);
      \draw [thick,black,->] (0.5,1) -- (0.5,1.15);
      \draw [thick,black,->] (1.6,1) -- (1.6,1.15);
      \draw [thick,black,->] (2.6,1) -- (2.6,1.15);
      \draw [thick,black,->] (3.6,1) -- (3.6,1.15);
      
    \end{tikzpicture}
    }

    &

    \\

        \scalebox{.8}{
    \begin{tikzpicture}[scale=2]
      
      \node at (0.5,1.24) {\small\color{mymauve}$Update_1(0,0)$};
      \node at (1.6,1.24) {\small\color{mymauve}$Access_1(0, 1)$};
      
      \node at (0.5,0.92) {\small\color{mygray}$\tau_0 =0$};
      \node at (1.6,0.92) {\small\color{mygray}$\tau_1 =1$};
      
      \node at (0.1,1.0) {$\sigma_4$};

      \draw [thick,black,->] (0.2,1) -- (3.1,1);
      \draw [thick,black,-] (0.2,1.05) -- (0.2,0.95);
      \draw [thick,black,->] (0.5,1) -- (0.5,1.15);
      \draw [thick,black,->] (1.6,1) -- (1.6,1.15);
      
    \end{tikzpicture}
    }
    &
    \scalebox{.8}{
    \begin{tikzpicture}[scale=2]
      
      \node at (0.5,1.24) {\small\color{mymauve}$Collect_2(0, 1)$};
      \node at (1.6,1.24) {\small\color{mymauve}$Collect_1(0, 0)$};
      \node at (2.6,1.24) {\small\color{mymauve}$Access_1(0, 1)$};
      
      \node at (0.5,0.92) {\small\color{mygray}$\tau_0 =0$};
      \node at (1.6,0.92) {\small\color{mygray}$\tau_1 =1$};
      \node at (2.6,0.92) {\small\color{mygray}$\tau_2 =2$};
      
      \node at (0.1,1.0) {$\sigma_5$};

      \draw [thick,black,->] (0.2,1) -- (3.5,1);
      \draw [thick,black,-] (0.2,1.05) -- (0.2,0.95);
      \draw [thick,black,->] (0.5,1) -- (0.5,1.15);
      \draw [thick,black,->] (1.6,1) -- (1.6,1.15);
      \draw [thick,black,->] (2.6,1) -- (2.6,1.15);
      
    \end{tikzpicture}
    }
    %
    \end{tabular}
    \vspace{-0.1in}
    \caption{{\small \nf{Five} traces from the DCC example.}}
    \label{fig:traces}
    \vspace{-0.2in}
\end{figure}

\vskip 0.05in
\noindent
\textbf{Syntax.}
Let $\mathbb{I}$ be a set of non-empty intervals over $\mathbb{N}$. An \emph{interval} $I \in \mathbb{I}$ can be expressed as $[b, b')$ where $b \in \mathbb{N}$ and $b' \in \mathbb{N} \cup \infty$.  A \emph{signature} $S$ is a tuple $(C, R, \iota)$, where $C$ is a set of constants  and $R$ is a finite set of predicate symbols 
(for relation), respectively. Without loss of generality, we assume all constants are from the integer domain $\mathbb{Z}$ where the theory of linear integer arithmetic (LIA) holds. The function $\iota : R \rightarrow \mathbb{N}$ associates each predicate symbol $r \in R$ with an arity $\iota(r) \in \mathbb{N}$. Let \nf{$\variables$} be a countable infinite set of variables from domain $\mathbb{Z}$ and a term $t$ is defined inductively as $t: \: c \: | \:  v \: | \: t + t \: | \:  c \times t$. We denote $\bar{t}$ as a vector of terms and $\bar{t}^{\;k}_{x}$ as the vector that contains $x$ at index $k$. 
The syntax of \mfotl formulas is defined as follows:
    \emph{(1)} $\top$ and $\bot$, representing values ``true'' and ``false''; 
    \emph{(2)} $t = t'$ and $t > t'$, for terms $t$ and $t'$;  
    \emph{(3)} $r(t_1 ... t_{\iota(r)})$ for $r \in R$ and terms $t_1 ... t_{\iota(r)}$\label{rule:guarded};  
    \emph{(4)} $\phi \wedge \psi$, $\neg \phi$ for \mfotl formulas $\phi$ and $\psi$; 
    \emph{(5)} $\exists x . (r(\bar{t}^{\;k}_{x}) \wedge \phi)$ for \mfotl formula $\phi$, relation symbol $r \in R$, variable $x \in \variables$ and a vector of terms $\bar{t}^{k}_{x}$ s.t.  $x = \bar{t}^{k}_{x}[k]$; 
    and \emph{(6)} $\until {\phi} {\psi} {I}$ (until), $\since {\phi} {\psi} {I}$ (since), $\opnext{\phi} {I}$ (next), $\prev{\phi} {I}$ (previous) for \mfotl formulas $\phi$ and $\psi$, and an interval $I \in \mathbb{I}$.

We consider a restricted form of quantification (syntax rule~\textit{(5)}, above) similar to guarded quantification~\cite{Halle-12}. Every existentially quantified variable $x$ must be guarded by some relation $r$ (i.e., for some $\bar{t}$, $r(\bar{t})$ holds and $x$ appears in $\bar{t}$). Similarly, universal quantification must be guarded as $\forall x . (r(\bar{t}) \Rightarrow \phi)$ where $x \in \bar{t}$.  Thus, $\nf{\neg} \exists x . \neg r(x)$ (and $\forall x . r(x))$ are not allowed.

The temporal operators $\mathcal{U}_I$, $\mathcal{S}_I$, $\CIRCLE_I$ and $\Circle_I$ require the satisfaction of the formula within the time interval given by $I$. We write $[b, )$ as a shorthand for $[b, \infty)$; if $I$ is omitted, then the interval is assumed to be $[0, \infty)$. Other classical unary temporal operators $\lozenge_I$ (eventually), $\square_I$ (always), and $\blacklozenge_I$ (once) are defined as follows:  $\eventually{\phi}{I} = \until{\top}{\phi}{I}$, $\always{\phi}{I} = \neg \eventually{\neg\phi}{I}$, and  $\once{\phi}{I} = \since{\top}{\phi}{I}$.
Other common logical operator such as $\vee$ (disjunction) and $\forall$ (universal quantification) are expressed through negation of $\wedge$ and $\exists$, respectively.

\begin{example}
\label{expl:synthax}
Suppose a data collection centre (DCC) \textit{collect}s and \textit{access}es personal data information with three requirements: $req_0$ stating that no data is allowed to be accessed before the data ID has been collected for 15 days (360 hours); $req_1$: data can only be updated after having been collected or last updated for more than a week (168 hours); and $req_2$:  data value can only be accessed if the value has been collected or updated within a week (168 hours).
The signature $S_{data}$ for DCC contains three binary relations ($R_{data}$): \textit{Collect}, \textit{Update}, and \textit{Access}, such that \textit{Collect}($d$, $v$), \textit{Update}($d$, $v$) and \textit{Access}($d$, $v$) hold at a given time point if and only if data at id $d$ is collected, updated, and accessed with value $v$ at this time point, respectively.  The \mfotl formulas for $P1$, $req_0$, $req_1$ and $req_2$ are shown in Fig.~\ref{fig:formulas}.
For instance, the formula $req_0$ specifies that if a data value stored at id $d$ is accessed, then some data must have been collected and stored at id $d$ at least 360 hours ago ($\blacklozenge_{[360, )}]$).
\end{example}

\vskip 0.03in
\noindent
\textbf{Semantics.}
 A first-order (FO) structure $D$ over the signature $S = (C, R, \iota)$ is comprised of a non-empty domain $\textit{dom}(D) \neq \emptyset$ and an interpretation for $c^D \in \textit{dom}(D)$ and $r^D \subseteq \textit{dom}(D)^{\iota(r)}$ for each $c \in C$ and $r \in R$. The semantics of \mfotl formulas is defined over a sequence of FO structures $\Bar{D}= (D_0,D_1, \ldots )$ and a sequence of natural numbers representing time $\Bar{\tau} = (\tau_0, \tau_1, \ldots)$, where
    (a) $\Bar{\tau}$ is a monotonically increasing sequence;
    (b) $\textit{dom}(D_i) = \textit{dom}(D_{i+1})$ for all $i \ge 0$ (all $D_i$ have a fixed domain); and
    (c) each constant symbol $c \in C$ has the same interpretation across $\bar{D}$ (i.e., $c^{D_i} = c^{D_{i+1}})$. 
Property (a) ensures that time never decreases as the sequence progresses; 
and (b) ensures that the domain is fixed (referred to as $\textit{dom}(\Bar{D})$)
$\Bar{D}$ is similar to timed words in metric time logic (MTL), but instead of associating a set of propositions with each time point, \mfotl uses a structure $D$ to interpret the symbols in the signature $S$. 
The semantics of \mfotl is defined over a trace of timed first-order structures $\sigma = (\Bar{D}, \Bar{\tau})$, where every structure $D_i \in \Bar{D}$
specifies the set of tuples ($r^{D_i}$) that hold for every relation $r$ at time $\tau_i \in \Bar{\tau}$. 
Let $(\Bar{D}, \Bar{\tau})$ denote  an \mfotl trace.

\begin{example}
\label{expl:signature}
Consider the signature $S_{data}$ in the DCC example. Let $\tau_1 = 0$ and $\tau_2 =361$, and let $D_1$ and $D_2$ be two first-order structures with $r^{D_1} = \textit{Collect}(0, 0)$ and $r^{D_2} = \textit{Access}(0, 0)$, respectively. The trace $\sigma_1 = ((D_1, D_2), (\tau_1, \tau_2))$ is a valid trace shown in Fig.~\ref{fig:traces} and representing two timed relations: (1) data value 0 collected and stored at id 0 at hour 0 and (2) data value 0 is read by accessing id 0 at hour 361.  
\end{example}

 \vspace{-0.05in}
 A \emph{valuation function} $v:\! \nf{\variables} \rightarrow\! \textit{dom}(\bar{D})$  maps a set \nf{$\variables$} of variables  to their interpretations in the domain $\textit{dom}(\bar{D})$.
For vectors $\bar{x}\! =\! (x_1,\! \ldots,\! x_n)$ and $\bar{d}\! =\! (d_1,\! \ldots,\! d_n) \in \textit{dom}(\bar{D})^n$, the \emph{update operation} $v[\bar{x} \rightarrow \bar{d}]$ produces a new valuation function $v'$ s.t. $v'(x_i) = d_i$ for $ 1 \le i \le n$, and $v(x') = v'(x')$ for every $x' \notin \bar{x}$. 
For any  constant $c$,  $v(c) = c^D$.
Let $\bar{D}$ be a sequence of FO structures over signature $S = (C, R, \iota)$ and $\bar{\tau}$  be a sequence of natural numbers. Let $\phi$ be an \mfotl formula over $S$, $v$ be a valuation function and $i \in \mathbb{N}$. A fragment of the relation $(\bar{D}, \bar{\tau}, v, i) \models \phi$ is defined in Fig.~\ref{tab:semantic}.  

The operators $\CIRCLE_I$, $\Circle_I$, $\mathcal{U}_I$ and $\mathcal{S}_I$ are augmented with an interval $I\in \mathbb{I}$ which defines the satisfaction of the formula within a time range specified by $I$ relative to the current time at step $i$, i.e., $\tau_i$.  

\begin{figure}[t]
\begin{tabular}{lcl}
            \centering
              $(\bar{D}, \bar{\tau}, v, i) \models t = t'$ & iff & $v(t) = v(t')$\\
              $(\bar{D}, \bar{\tau}, v, i) \models t > t'$ & iff & $v(t) > v(t')$\\
              $(\bar{D}, \bar{\tau}, v, i) \models r(t_1,..,t_{\iota(r)})$ & iff & $r(v(t_1),..,v(t_{i(r))}) \in r^{D_i}$\\
             $(\bar{D}, \bar{\tau}, v, i) \models \neg \phi$ & iff & $(\bar{D}, \bar{\tau}, v, i) \not\models \phi$ \\
             $ (\bar{D}, \bar{\tau}, v, i) \models \phi \wedge \psi $ & iff & $(\bar{D}, \bar{\tau}, v, i) \models \phi$ and  $(\bar{D}, \bar{\tau}, v, i) \models \psi$\\
             $ (\bar{D}, \bar{\tau}, v, i) \models \exists x \cdot (r(\bar{t}^{\nf{k}}_{x}) \wedge \phi)  $ & iff & $(\bar{D}, \bar{\tau}, v[x \rightarrow d], i) \models (r(\bar{t}^{\nf{k}}_{x})) \wedge \phi$ for some $d \in \textit{dom}({\bar{D}})$\\
             $ (\bar{D}, \bar{\tau}, v, i) \models \opnext{\phi} {I}  $ & iff & $(\bar{D}, \bar{\tau}, v, i+1) \models \phi$ and $\tau_{i+1} - \tau_{i} \in I$\\
             $ (\bar{D}, \bar{\tau}, v, i) \models \prev{\phi} {I}  $ & iff & $i \ge 1$ and $(\bar{D}, \bar{\tau}, v, i-1) \models \phi$ and $\tau_{i} - \tau_{i-1} \in I$\\
             $ (\bar{D}, \bar{\tau}, v, i) \models \until{\phi} {\psi} {I}  $ & iff & exists $j \ge i$ and $(\bar{D}, \bar{\tau}, j , v) \models \psi$ and $\tau_j - \tau_i \in I$\\
            & & and for all $k \in \mathbb{N}$ $i \le k < j \Rightarrow (\bar{D}, \bar{\tau}, k, v) \models \phi$\\
             $ (\bar{D}, \bar{\tau}, v, i) \models \since{\phi} {\psi} {I}  $ & iff & exists $j \le i$ and $(\bar{D}, \bar{\tau}, j , v) \models \psi$ and $\tau_i - \tau_j \in I$\\
            & & and for all $k \in \mathbb{N}$ $i \ge k > j \Rightarrow (\bar{D}, \bar{\tau}, k, v) \models \phi$
\end{tabular}
\caption{\mfotl semantics.}
\label{tab:semantic}
\vspace{-0.15in}
\end{figure}

\begin{definition}[\mfotl Satisfiability]
An \mfotl formula $\phi$ is \emph{satisfiable} if there exists a sequence of FO structures $\Bar{D}$ and natural numbers $\Bar{\tau}$, and a valuation function $v$ such that $(\Bar{D}, \Bar{\tau}, v, 0) \models \phi$. $\phi$ is \emph{unsatisfiable} otherwise. 
\end{definition}

\begin{example}
\label{expl:satisfiability}
In the DCC example, the 
\mfotl formula $req_{0}$ is \emph{satisfiable} because  $(\Bar{D}, \Bar{\tau}, v, 0) \models req_{0}$ (where $\sigma_1 = (\Bar{D}, \Bar{\tau})$ in Fig.~\ref{fig:traces}). 
Let $req_{0}'$ be another \mfotl formula: $\eventually{\exists j . (\textit{Access}(0, j))}{[0, 359]}$.  The formula $req_{0}' \wedge req_{0}$ is \emph{unsatisfiable} because if data stored at id $0$ is accessed between 0 and 359 hours, then it is impossible to collect the data at least 360 hours prior to its access.
\end{example}

\section{Bounded Satisfiability Checking Problem}
\label{sec:bsc}

The satisfiability of MFOTL properties is generally undecidable since MFOTL is expressive enough to describe the blank tape problem~\cite{post-47} (which has been shown to be undecidable). Despite the undecidability result, we can derive a bounded version of the problem, \textit{bounded satisfiability checking} (BSC), for which a sound and complete decision procedure exists. When facing a hard instance for  satisfiability checking, the solution to BSC provides bounded guarantees (i.e., whether a solution exists within a given bound). In this section, we first define satisfiability checking and then the BSC problem for MFOTL formulas. 
\emph{Satisfiability checking}~\cite{DBLP:journals/tosem/PradellaMP13} is a verification technique that extends model checking by
replacing a state transition system with a set of temporal logic formulas.  
In the following, we define  satisfiability checking of \mfotl formulas.
\begin{definition}[Satisfiability Checking of \mfotl Formulas] \label{def:MFOTLMTC}
Let $P$ be an \mfotl formula over a signature $S = (C, R, \iota)$, and let $\reqs$ be a set of \mfotl requirements over $S$. $\reqs$ complies with $P$ (denoted as $\reqs \Rightarrow P) $ iff $\bigwedge_{\psi\in \reqs} \psi \wedge \neg P$ is \emph{unsatisfiable}.  We call a solution to $\bigwedge_{\psi\in \reqs} \psi \wedge \neg P$, if one exists,  a \emph{counterexample} to $\reqs \Rightarrow P$.
\end{definition}
\begin{example}
\label{expl:satisfiabilitychecking}
Consider our DCC system requirements and the privacy data property $P1$ stating that if personal health information is not accurate or not up-to-date, it should not be accessed (see Fig.~\ref{fig:formulas}).
$P1$ is not respected by the set of DCC requirements $\{req_0, req_1, req_2\}$ because $\neg P1 \wedge req_0 \wedge req_1 \wedge req_2$ is \emph{satisfiable}. 
The counterexample $\sigma_2$ (shown in Fig.~\ref{fig:traces}) indicates that data can be re-collected, and the re-collection does not have the same time restriction as the updates. If a fourth policy requirement $req_3$ (Fig.~\ref{fig:formulas}) is added to prohibit re-collection of collected data, then property $P1$ would be respected (i.e., $\{req_0, req_1, req_2, req_3\} \Rightarrow P1$).
\end{example}

\begin{definition}[Finite trace and bounded trace]
Given a trace $\sigma = (\bar{D}, \bar{\tau}, v)$, 
we use $vol(\sigma)$ \nf{(the \emph{volume} of $\sigma$)}, to denote the total number of times that any relation holds
across all FO structures in $\bar{D}$ (i.e., $\sum_{r \in R}\sum_{D_i \in \bar{D}} (|r^{D_i}|)$). The trace $\sigma$ is \emph{finite} if $vol(\sigma)$ is finite. The trace is \emph{bounded by volume $\bound \in \mathbb{N}$} 
if and only if $vol(\sigma) \le \bound$.
\end{definition}


\begin{example}
\label{expl:finitetrace}
The volume of trace $\sigma_3$ in Fig.~\ref{fig:traces}, $vol(\sigma_3) = 3 $ since there are three relations: \textit{Collect}(1, 15), \textit{Update}(1, 0), and \textit{Access}(1, 15). 
Note that the volume 
\nf{is the total number of tuples that hold for any relation across all time points}; \nf{ multiple tuples can thus hold for multiple relations for a single time point}.
\end{example}


\begin{definition}[Bounded satisfiability checking of \mfotl properties]\label{def:BMFOTLMTC}
Let $P$ be an \mfotl property, $\reqs$ be a set of \mfotl  requirements, and $\bound$ be a natural number. The \emph{bounded satisfiability checking problem} determines the existence of a counterexample $\sigma$ to $\reqs \Rightarrow P$ such that
 $vol(\sigma) \le \bound$.  
\end{definition}

\section{Checking Bounded Satisfiability}
\label{sec:approach}

In this section, we present an overview of the bounded satisfiability checking (BSC) process that translates the MFOTL formula into \textit{first-order logic with relational objects} (\tfol) formulas, and looks for a satisfying solution for the \tfol{} formulas.
Then, we provide the translation of MFOTL formulas to \tfol and discuss the process complexity.
 
\subsection{Overview of BSC for MFOTL Formulas}

\tikzset{
    tool/.style={draw,rectangle,rounded corners,minimum width=2cm,minimum height=0.9cm,align=center,text width=1.6cm,fill=black!20,font=\small\sffamily},
    toold/.style={draw=red,line width=2pt, dotted,rectangle,rounded corners,minimum width=2cm,minimum height=0.9cm,align=center,text width=1cm,fill=red!20,font=\small\sffamily},
    tool-n/.style={draw,rectangle,rounded corners,minimum width=2cm,minimum height=0.9cm,align=center,text width=2cm,fill=white,font=\small\sffamily},
    tool_carla/.style={draw,rectangle,rounded corners,minimum width=1.4cm,minimum height=0.9cm,align=center,text width=0.9cm,fill=white,font=\small\sffamily},
    artifact/.style={draw=white,trapezium,trapezium left angle=82,trapezium right angle=98,text width=3.2cm,align=center,font=\small\sffamily},
    artifactd/.style={draw=white,trapezium,trapezium left angle=82,trapezium right angle=98,text width=2.8cm,align=center,font=\small\sffamily},
    artifactmm/.style={draw=white,trapezium,trapezium left angle=82,trapezium right angle=98,text width=2.6cm,align=center,font=\small\sffamily},
    artifactbb/.style={draw=white,trapezium,trapezium left angle=82,trapezium right angle=98,text width=2cm,align=center,font=\small\sffamily},
    artifactbbb/.style={draw=white,trapezium,trapezium left angle=82,trapezium right angle=98,text width=2.3cm,align=center,font=\small\sffamily},
    artifact-b/.style={draw=white,trapezium,trapezium left angle=82,trapezium right angle=98,text width=2.8cm,align=center,font=\small\sffamily},
    artifact-s/.style={draw=white,trapezium,trapezium left angle=82,trapezium right angle=98,text width=1.6cm,align=center,font=\small\sffamily},
    progl/.style={draw,tape,tape bend top=none,align=center,text width=2cm,fill=gray!20,font=\small\sffamily}
}


\begin{figure*}[t]
    \centering
\scalebox{.67}{
\begin{tikzpicture}[x=2.25cm,y=0.9cm]
\draw[fill=lightgray!10] (0.6,5.7) rectangle (5.4,1.9);


\node[artifact] (req) at (-0.15,4.4) {requirements (MFOTL)};
\node[artifactmm, below = 10 pt of req] (prop) {property (MFOTL)};
\node[artifactd] (datacons) at (-0.25,2.4) {data domain constraints (FOL)};
\node[artifactbb, above = 10 pt of req] (bound) {bound (Nat)};
\node[tool] (trans) at (1.2,3.7) {\textsc{Translate}};
\node[tool] (search) at (3.4,3) {\textsc{search}};
\node[toold, above = 28 pt of search] (ground)  {\textsc{Ground}};
\node[tool_carla] (smt) at (5,2.5) {\textsc{Solve} \\ (SMT)};
\node[artifactbbb] (out) at (6.2,3.7) {counterexample $\; | \; $bounded-UNSAT \nf{($\; | \; $UNSAT)}};

\node[rotate=0,font=\sffamily] at (2.9,4.4) {{\small formulas}};
\node[rotate=0,font=\sffamily] at (3,4) {{\small (\tfol)}};
\node[rotate=0,font=\sffamily] at (2.22,3.8) {{\small formulas (\tfol)}};
\node[rotate=0,font=\sffamily] at (2.22,3.45) {{\small $+$ domain}};
\node[rotate=0,font=\sffamily] at (4.15,4.4) {{\small quantifier-free}}; 
\node[rotate=0,font=\sffamily] at (4.15,4.1) {{\small formulas (FOL)}};;
\node[rotate=0,font=\sffamily] at (4.15,3.26) {{\small query}};
\node[rotate=0,font=\sffamily] at (4.2,2.7) {{\small answer}};



\begin{scope}[every path/.style={-latex}]
\draw [line width=1pt] (trans) edge [bend right=8](search)
       (req) edge (trans)
       (prop) edge (trans)
       (datacons) edge [bend right=9](smt)
       (search) edge [bend left=10](out)
       (search) edge [bend left=10](smt)
       (smt) edge  [bend left=10](search)
      ;
      
\draw[dotted,draw=red,line width=1.25pt]
       (bound) edge [bend left=5] (ground)
       (search) edge [bend left] (ground)
       (ground) edge [bend left] (search)
       ;
       
\draw[dashed,draw=mymauve,line width=1.25pt]
       (bound) edge [bend left=16] (search);
      
      
\end{scope}
\end{tikzpicture}
}
\vspace{-0.1in}

\caption{\small Overview of the naive and our incremental ($\mainAlg$) MFOTL bounded satisfiability checking approaches. Solid boxes and arrows are shared between the two approaches. Blue dashed arrow is specific to the naive approach. Red dotted arrows and the additional red output in bracket are specific to $\mainAlg$.}
\label{fig:architecture-c}
\vspace{-0.15in}
\end{figure*}
We aim to address the bounded satisfiability checking problem (Def.~\ref{def:BMFOTLMTC}),  looking for a satisfying run $\sigma$ within a given volume bound $\bound$ that limits the number of relations in $\sigma$.  
First, we $\textsc{translate}$ the \mfotl formulas to \tfol formulas.
The considered constraints in the formulas include 
those of the system requirements and the legal property, and \emph{optional} data constraints specifying the data \nf{value constraint for a datatype.}
\nf{The data constraints can be defined as a range, a ``small'' data set, or the union/intersection of other data constraints. If data constraints are not specified, then the data value comes from the domain $\mathbb{Z}$. Note that the optional data constraints do not affect the complexity of BSC, but they do help prune unrealistic counterexamples.}
Second, we $\textsc{search}$ for a satisfying solution to the \tfol formula; an SMT solver is used here to determine the satisfiability of the \tfol constraints and the data domain constraints. 
The answer from the SMT solver is analyzed to return an answer to the satisfiability checking problem (a counterexample $\sigma$, or "bounded-UNSAT").

\subsection{Translation of MFOTL to First-Order Logic}
\label{sec:translate}
In this section, we describe the translation target \tfol, the translation rules and prove their correctness.
\vspace*{-1em}
\subsubsection{FOL with Relational Object (FOL*)}
We start by introducing the syntax of \tfol. 
A \emph{signature} $S$ is a tuple $(C, R, \iota)$, where $C$ is a set of constants, $R$ is a set of relation symbols, and $\iota : R \rightarrow \mathbb{N} $ is a function that maps a relation to its arity.   We assume that the domain of constant $C$ is $\mathbb{Z}$, which matches the one for \mfotl, where the theory of linear integer arithmetic (LIA) holds. Let \nf{$\variables$} be a set of variables in the domain $\mathbb{Z}$. A \emph{relational object} $o$ of class $r \in R$ (denoted as $o:r$) is an object with $\iota(r)$ regular attributes and two special attributes, where every attribute is a variable. We assume that all regular attributes are ordered and denote $o[i]$ to be the $i$th attribute of $o$. 
Some attributes are named, and $o.x$ refers to $o$'s attribute with the name `$x$'.  
Each relational object $o$ has two special attributes $o.ext$ and $o.time$. The former is a boolean variable indicating whether $o$ exists in a solution, and the latter is a variable representing the occurrence time of $o$. For convenience, we define a function $\class$($o$) to return the relational object's class.  Let \emph{a} \tfol{} \emph{term} $t$ be defined inductively as $t: \; c \; | \;  v \; | \; o[\nf{k}] \; | \; o.x \; | \; t + t \; | \;  c \times t$ for any constant $c \in C$, any variable $v \in \nf{\variables}$, any relational object $o:r$, any index $\nf{k} \in [1, \iota(r)]$ and any valid attribute name $x$. Given a signature $S$, the syntax of the \tfol{} formulas is defined as follows:
    \emph{(1)} $\top$ and $\bot$, representing values ``true'' and ``false''; 
    \emph{(2)} $t = t'$ and $t > t'$, for term $t$ and $t'$;  
    \emph{(3)} $\phi_f \wedge \psi_f$, $\neg \phi_f$ for \tfol{} formulas $\phi_f$ and ${\psi_f}$; 
    \emph{(4)} $\exists o:r \cdot \: (\phi_f)$ for an \tfol{} formula $\phi_f$  and a class $r$;
    \emph{(5)} $\forall o:r \cdot \: (\phi_f)$ for an \tfol{} formula $\phi_f$  and a class $r$. The quantifiers for \tfol{} formulas are limited to relational objects, as shown by rules (4) \& (5).  
    Operators $\vee$ and $\forall$ can be defined in \tfol as follows: $\phi_f \vee \psi_f = \neg (\neg \phi_f \wedge \neg \psi_f)$ and $\forall o:r \cdot \phi_f = \exists o:r \cdot \neg \phi_f$. We say an \tfol formula is in a \textit{negation normal form} (NNF) if negations ($\neg$) do not appear in front of $\neg$, $\wedge$, $\vee$, $\exists$ and $\forall$. For the rest of the paper, we assume that every \tfol $\phi$ is in NNF.     
    
Given a signature $S$, \emph{a domain} $\domain$ is a finite set of relational objects. An \tfol{} formula \emph{grounded} in the domain $\domain$ (denoted by $\phi_{\domain}$) is a quantifier-free FOL formula that eliminates quantifiers on relational objects using the following rules: (1) $\exists o: r \cdot \: (\phi_f)$  to $ \bigvee_{{o':r} \in \domain} (o'.ext \wedge \phi_f [o \leftarrow o'])$ and (2) $\forall o:r \cdot \: (\phi_f)$ to $\bigwedge_{{o':r} \in \domain} (o'.ext \Rightarrow \phi_f [o \leftarrow o'])$. An \tfol{} formula $\phi_f$ is \emph{satisfiable in $\domain$} if there exists a variable assignment $v$  that evaluates $\phi_{\domain}$ to $\top$ according to the standard semantics of FOL. An \tfol{} formula $\phi_f$ is \emph{satisfiable} if there exists a finite domain $\domain$ such that $\phi_f$ is satisfiable in $\domain$. 
We call $\sigma = (\domain, v)$ \emph{a satisfying solution to $\phi_f$}, denoted as $\sigma \models \phi_f$. Given a solution $\sigma = (D, v)$, we say a relational object $o$ is in $\sigma$, denoted as $o \in \sigma$, if $o \in D$ and $v(o.ext)$ is true.
 The \emph{volume of the solution}, denoted as $vol(\sigma)$, is $|\{o \mid o \in \sigma\}|$.
\begin{example} Let $a$ be a relational object of class $\textit{A}$ with attribute name $val$. The formula $\forall a: A.\: (\exists a':A \cdot \: (a.val < a'.val) \wedge \exists a:A \cdot \: a.val = 0) $ has no satisfying solutions in any finite domain. On the other hand, the formula $\forall a:A \cdot \: (\exists a', a'':A  \cdot \: (a.val = a'.val + a''.val) \wedge \exists a:A \cdot \: a.val = 5) $ has a solution $\sigma = (\domain, v)$ of volume 2, with the domain $\domain = (a_1, a_2)$ and the value function $v(a_1.val) = 5$, $v(a_2.val) = 0$ because if $a \gets a_1$ then the formula is satisfied by assigning $a' \gets a_1, \;a'' \gets a_2$; and  if $a \gets a_2$, then the formula is satisfied by assigning $a' \gets a_2, \;a'' \gets a_2$.
\end{example}
\subsubsection{\textbf{From \mfotl Formulas to \tfol{} Formulas.}}
We now discuss the translation rule from the \mfotl formulas to \tfol formulas. Recall that \mfotl semantics is defined for a time point $i$ on a trace $\sigma = (\bar{D}, \bar{\tau}, v, i)$, where $\bar{D} = (D_1, D_2, \ldots)$ is a sequence of FO structures and $\bar{\tau} = (\tau_1,\tau_2,\ldots)$ is a sequence of time values. The time value of the time point $i$ is given by $\tau_i$, and if $i$ is not specified, then $i = 1$. The semantics of the \tfol formulas is defined for a domain $\domain$ where the information of time is associated with relational objects in the domain. Therefore, the time point $i$ (and its time value $\tau_i$) should be considered during the translation from \mfotl to \tfol since the same \mfotl formula at different time points represents different constraints on the trace $\sigma$. Formally, our translation function $\textsc{translate}$, abbreviated as $T$, translates an \mfotl formula $\phi$ into a function $f: \tau \rightarrow \phi_{f}$, where $\tau \in \mathbb{N}$ and $\phi_{f}$ is an \tfol formula. The translation rules are stated in Fig.~\ref{tab:translate}.

\noindent
\underline{Representing time points in \tfol.} Since \tfol quantifiers are limited to relational objects, to quantify over time points (which is necessary to capture the semantics of \mfotl temporal operators such as $\mathcal{U}$), 
the translated \tfol formulas use a special \emph{internal} class of relational objects $\tp$ (e.g., $\exists o:\tp$). Relational objects of class $\tp$ capture all possible time points in a trace, and they have two attributes, $ext$ and $time$, to record the existence and the value of the time point, respectively. To ensure that every time value in a solution is represented by some relational object of $\tp$, we introduce the \textit{time coverage} \tfol axiom.
\begin{named}[Time coverage] \label{axiom:tc}
Let $\phi_f$ be an \tfol formula and let $\sigma$ be its solution. For every relational object $o \in \sigma$, there exists an object $o'$ of class $\tp$ s.t. $o$ and $o'$ share the same time value. Formally, $\forall o \cdot (\exists o': \tp \cdot o.time = o'.time)$. 
\end{named}
The translation of $\opnext{\phi}{I}$ uses function $\textsc{Next}(t_1, t_2)$ to assert that $t_1$ is the next time value of $t_2$. Formally, $\textsc{Next}(t_1, t_2) = \forall o: \tp \cdot o.time > t_2 \Rightarrow t_1 \le o.time$.  Function $\textsc{Prev}(t_1, t_2)$ for translation of $\prev{\phi}{I}$ is defined similarly. 

\begin{figure}[t]
\scalebox{0.85}{
    \begin{tabular}{lcl}
                \centering
                $T(t = t', \tau_i)$ & $\rightarrow$ & $t = t'$\\
                $T(t > t', \tau_i)$ & $\rightarrow$ & $t > t'$\\        
                $T(r(t_1,..,t_{\iota(r)}), \tau_i)$ & $\rightarrow$ & $\exists o:r \cdot \bigwedge_{j=1}^{\iota(r)}(o.j = t_j) $
                $\wedge (\tau_i = o.time)$ \\
                $ T(\neg\phi, \tau_i)$ & $\rightarrow$ & $\neg T(\phi, \tau_i)$\\
                $T(\phi \wedge \psi, \tau_i)$ & $\rightarrow$ & $T(\phi, \tau_i)\wedge T(\psi, \tau_i)$\\
                $T(\exists x \cdot r(\bar{t}^{\nf{k}}_{x}) \wedge \phi, \tau_i)$ & $\rightarrow$ & $\exists o : r \cdot T( (r(\bar{t}^{\nf{k}}_{x}) \wedge \phi)[x \rightarrow o[\nf{k}]], \tau_i)$\\
                $T(\opnext{\phi}{I}, \tau_i) $ & $\rightarrow$ & $\exists o:\tp \cdot \textsc{Next}(o.time,  \tau_i) \wedge T(\phi, o.time) \wedge (o.time - \tau_{i}) \in I$ \\
                $T(\prev{\phi}{I}, \tau_i) $ & $\rightarrow$ & $\exists o:\tp \cdot  \textsc{Prev}(o.time,  \tau_i) \wedge T(\phi, o.time) \wedge (\tau_{i} - o.time) \in I$\\
                 $T(\until{\phi}{\psi}{I}, \tau_i) $ & $\rightarrow$ & $\exists o: \tp  \cdot (o.time \ge \tau_i \wedge (o.time - \tau_i) \in I \wedge T(\psi, o.time)$\\
                 && and $\forall o': \tp \cdot o'.time \cdot (\tau_i \le o'.time < o.time \Rightarrow T(\phi, o'.time)))$\\
                 $T(\since{\phi}{\psi}{I}, \tau_i) $ & $\rightarrow$ & $\exists o:\tp  \cdot (o.time \le \tau_i \wedge (\tau_i - o.time) \in I \wedge T(\psi, o.time)$\\
                 && and $\forall o': \tp \cdot (\tau_i \ge o'.time > o.time \Rightarrow T(\phi, o'.time)))$\\
                $T(\phi)$ & $\rightarrow$ & $T(\phi, \tau_1)$ 
                \end{tabular}
 }           
\caption{{\small Translation rules from \mfotl to \tfol. 
$\tp$ is an internal class of relational objects used to represent time values at different time points. The predicate $\textsc{Next}(t_1, t_2)$ ($\textsc{Prev}(t_1, t_2)$) asserts that $t_1$ is the next (previous) time value of $t_2$.
}} 
\vspace{-0.2in}
\label{tab:translate}
\end{figure}

\begin{definition}[Mapping from \mfotl trace to \tfol trace]\label{def:solutionmap}
Let an \mfotl trace $(\Bar{D}, \Bar{\tau})$ and a valuation function $v$ be given. A function $M ((\Bar{D}, \Bar{\tau}), v) \rightarrow (\domain, v')$ is \emph{a mapping between an \mfotl trace and an \tfol trace} if 
$M$ satisfies
the following rules:
(1) for every $\tau_i \in \Bar{\tau}$, there exists a relational object $o: \tp \in  \domain$ such that $\tau_i = v'(o.time)$; (2) for every structure $D_{i} \in \Bar{D}$, if a tuple $\bar{t}$ holds for a relation $r$, (i.e., $\bar{t} \in r^{D_i}$), then there exists a relational object $o:r$ such that for $j \in \iota(r)$, $\bar{t}[j] = v'(o[j])$ and $v'(o.time) = \tau_i \wedge v'(o.ext) = \top$; (3)  for every term $t$ defined \nf{for} $v$, $v(t) = v'(T(t, \tau_i))$.
\end{definition}
\noindent The inverse of $M$, denoted as $M^{-1}$, is defined as follows: (1) $\Bar{\tau} = \textsc{sort}(\{v'(o.time) \mid o : TP \in D \cdot v'(o.ext) \})$ and (2) for every relational object $o:r$, if $v'(o.ext)$, then $(v'(o[1]) \ldots v'(o[\iota(r)])) \in r^{D_{i}}$, where $i$ is the index of \nf{the time value} $v'(o.time)$ in $\Bar{\tau}$.
\begin{lemma}\label{lemma: tcorrectness}
Given an \mfotl formula $\phi$, an \mfotl trace $(\Bar{D}, \Bar{\tau})$, a valuation function $v$, and a time point $i$,  the relation $(\bar{D}, \bar{\tau}, v, i) \models \phi$ holds iff there exists a satisfying trace $\sigma = (D, v')$ for the formula $T(\phi, \tau_i)$.
\end{lemma}

\vspace{-0.15in}

\begin{sop}
In the proof, we use $M$ and $M^-1$ (see Def.~\ref{def:solutionmap}) 
to transform an \mfotl solution into an \tfol trace, and show that it is a solution to the translated \tfol formula (and vice versa).  

$\implies$: if $(\bar{D}, \bar{\tau}, v, i) \models \phi$, then it is sufficient to show $(D, v') \gets M(\Bar{D}, \Bar{\tau}, v)$ is an \tfol{} solution. To prove $(D, v')$ is the solution to $T(\phi, \tau_i)$, we consider all the translation rules in Fig.~\ref{tab:translate}. The translated \tfol{} matches the semantics (Fig.~\ref{tab:semantic}) of \mfotl except for the translation of temporal operators (e.g., T($\opnext{\phi}{I}, \tau_i)$ and $T(\until{\phi}{\psi}{I}, \tau_i)$) where instead of quantifying over time points (e.g., $\exists j$ and $\forall k$), internal relational objects of class $\tp$ ($o, o':\tp$) are quantified over. By rule (1) of Dec.~\ref{def:solutionmap}, 
every time point and its time value are mapped to some relational object of class $\tp$. Therefore, the quantifiers on time points can be translated into the quantifiers on the relational objects of $\tp$. The mapped solution $(D, v')$ also satisfies Axiom~\ref{axiom:tc} because if a tuple $\bar{t}$ holds for some relation $r$ at some time $\tau$ in the \mfotl trace $(\bar{D}, \bar{\tau})$, then there exists a time point $i \in [1, |\bar{\tau}|]$ such that $\tau_i = \tau$. Therefore, by rule (1) of $M$, $\tau_i$ is represented by some $o :\tp$.

$\Longleftarrow$:  if $(D, v') \models T(\phi, \tau_i)$, then it is sufficient to show that the \mfotl trace $(\Bar{D}, \Bar{\tau}, v) \gets M^{-1}(D, v')$ satisfies $\phi$ at point $i$ (i.e., $(\bar{D}, \bar{\tau}, v, i) \models \phi$).  To prove $(\Bar{D}, \Bar{\tau}, v, i) \models \phi$, we consider all the translation rules in Fig.~\ref{tab:translate}. The translated \tfol{} formula matches the semantics of \mfotl (Fig.~\ref{tab:semantic}) except for the difference between the time points and the relational objects of class $\tp$. By Axiom~\ref{axiom:tc}, every relational object's time is captured by some time point, and by rule (2) of $M^-1$, every relational object is mapped onto some structure $D_i$ at some time $\tau_i$ by $M$. Therefore, $(\Bar{D}, \Bar{\tau}, v, i) \models \phi$.
\end{sop}
\begin{theorem}[Translation Correctness]\label{Thm:tcorrectness}
Given an \mfotl formula $\phi$ and an \mfotl trace $\sigma$, let $M(\sigma)$ be the \tfol solution mapped from $\sigma$ using function $M$ (Def.~\ref{def:solutionmap}). Then (1) $\sigma \models \phi$ if and only if $M(\sigma) \models T(\phi)$, and (2)  $vol(\sigma) = vol(M(\sigma)) - |\{ o : \tp \in M(\sigma) \}|$,  
where $|\{ o : \tp \in M(\sigma) \}|$ is the number of relational objects of the internal class $\tp$ in the solution $M(\sigma)$. 
\end{theorem}
\noindent
\textit{Proof.} Statement (1) of Thm.~\ref{Thm:tcorrectness} is a direct consequence of Lemma~\ref{lemma: tcorrectness}. Statement (2) is the result of rule (2) in Def.~\ref{def:solutionmap} because every relational object in the \tfol solution, except for the internal ones, i.e., $o: \tp$, has a one-to-one correspondence to tuples that hold for some relation in the \mfotl solution. $\qed$

For the rest of the paper, we assume that the internal relational objects of class $\tp$ do not count toward the volume of the \tfol{}, i.e., $vol(\sigma) = vol(T(\sigma))$.

\begin{example}
\label{expl:translate}
Consider a formula 
$exp = \always{\forall d \cdot (A(d) \implies \eventually{B(d)}{[5,10]})}{}$,
where $A$ and $B$ are unary relations.  The translated \tfol formula $T(exp)$ is:
    $\forall o: \tp \cdot \forall a : A \cdot  (o.time = a.time  \Rightarrow \exists o':\tp \cdot b: B \cdot o'.time = b.time \wedge a[1] = b[1] \wedge \; o.time+5 \le o'.time \le o.time+10)$. Since $o.time = a.time$ and $o'.time = b.time$, we can substitute $o.time$ and $o'.time$ with $a.time$ and $b.time$ in $T(exp)$, respectively. Then, the formula contains no reference to $o$ and $o'$, and we can safely drop the quantified $o$ and $o'$ (we can drop existential quantified $\tp$ relational object because of the time coverage axiom). The simplified formula is: $ \forall a : A \cdot \exists  b: B \cdot a[1] = b[1] \wedge \; a.time+5 \le b.time \le a.time+10$. 

This is important for designing system requirements that comply with LPs.
\end{example}

Given an \mfotl{} property $P$ and a set $Reqs$ of \mfotl requirements, and a volume bound $\bound$, the BSC problem can be solved by searching for a satisfying solution $v'$ for the \tfol{} formula $T(\neg P) \bigwedge_{\psi \in Reqs} T(\psi)$ in a domain $\domain$ with at most $\bound$ relational objects.

\vspace*{-1em}
\subsection{Checking \mfotl Satisfiability: A Naive Approach}
\label{sec:complexity}
Below, we define a naive procedure $\naiveAlg$ (shown in Fig.~\ref{fig:architecture-c}) for checking satisfiability of \mfotl formulas translated into \tfol. We then discuss the complexity of this naive procedure. Even though we do not use $\naiveAlg$ in this paper, its complexity constitutes an upper bound for our approach proposed in Sec.~\ref{sec:incremental}.

\vskip 0.05in
\noindent
{\bf Searching for a satisfying solution.} 
Let $\phi_f$ be an \tfol{} formula translated from an \mfotl{} formula $\phi$, and let $\bound$ be the volume bound.
$\naiveAlg$ solves $\phi_f$ via quantifier elimination. 
The number of relational objects in any satisfying solution of $\phi_f$ should be at most \nf{$\bound$}. Therefore, $\naiveAlg$ grounds the \tfol formulas within a domain of $\bound$ relational objects (see Sec.~\ref{sec:translate}), and then uses an SMT solver to check satisfiability of the grounded formula. If the domain has multiple classes of relational objects, we can unify them by introducing a ``superposition'' class whose attributes are the union of the attributes of all classes and a special ``name'' attribute to indicate the class represented by the superposition.

\vskip 0.05in
\noindent
{\bf Complexity.} 
The size of the quantifier-free formula is $O(\bound^k)$, where $k$ is the maximum depth of quantifier nesting. 
Since the background theory used in $\phi$ is restricted to linear integer arithmetic, solving the formula is NP-hard~\cite{DBLP:journals/jacm/Papadimitriou81}. 
Because $T$ (Tab.~\ref{tab:translate}) is linear in the size of the formula $\phi$, $\naiveAlg$ is NP-complete w.r.t. the size of the grounded formula, $\bound^k$.

\section{Incremental Search for Bounded Counterexamples}
\label{sec:incremental}
The naive BSC approach ($\naiveAlg$) proposed in Sec.~\ref{sec:complexity} is inefficient for solving the translated \tfol formulas given a large bound $n$ due to the size of the ground formula. 
Moreover, $\naiveAlg$ cannot detect unbounded unsatisfiability, and cannot provide optimality guarantees on the volume of counterexamples which are important for establishing the proof of unbounded correctness and localizing faults~\cite{Gastin-Moro-Zeitoune-04}, respectively. 
In this section, we propose an incremental procedure $\mainAlg$, which can detect unbounded unsatisfiability and provide the shortest counterexamples. An overview of $\mainAlg$ is given in Fig.~\ref{fig:architecture-c}.

$\mainAlg$ maintains an under-approximation of the search domain and the \tfol constraints.  
It uses the search domain to ground the \tfol constraints, and an SMT solver to determine the satisfiability of the grounded constraints. 
It analyzes the SMT result and accordingly either expands the search domain, refines the \tfol constraints, or returns an answer to the satisfiability checking problem (a counterexample $\sigma$, ``bounded-UNSAT", or ``UNSAT").
The procedure continues until an answer is obtained ($\sigma$ or UNSAT), or until the domain exceeds the bound $\bound$, in which case a ``bounded-UNSAT'' answer is returned. 

In the following, we describe $\mainAlg$ in more detail.
We explain the key component of $\mainAlg$, computing over- and under-approximation queries, in Sec.~\ref{subsec:OverandUnder}.  We discuss the algorithm itself in 
Sec.~\ref{subsec:main} and illustrate it in
Sec.~\ref{subsec:example}.
We prove its soundness (Thm.~\ref{thm:soundness}), completeness (Thm.~\ref{thm:termination}), and solution optimality (Thm.~\ref{Thm:optimality}) in Sec.~\ref{ap:SoundTerminateOpt}.  


\vspace{-0.15in}
\subsection{Over- and Under-Approximation} \label{subsec:OverandUnder}
\vspace{-0.05in}
 $\naiveAlg$ grounds the input \tfol formulas in a fixed domain $\domain$ (fixed by the bound $\bound$).  Instead,
 $\mainAlg$ under-approximates $\domain$ to $\domainUnder$ such that $\domainUnder \subseteq \domain$.
 With $\domainUnder$, we can create an over- and an under-approximation query to the bounded satisfiability checking problem.
 Such queries are used to check the satisfiability of \tfol formulas with domain $\domainUnder$. $\mainAlg$ starts with a small domain $\domainUnder$ and gradually expands it until either SAT or UNSAT is returned, or the
 domain size exceeds some limit (bounded-UNSAT). 
 
\vskip 0.05in
\noindent
\textbf{Over-approximation.}
Let $\groundinput$ be an \tfol formula, and $\domainUnder$ be a domain of relation objects. The procedure \textsc{Ground}, $\groundAlg$($\groundinput$, $\domainUnder$), encodes $\groundinput$ into a quantifier-free FOL formula $\phi_g$ s.t. the unsatisfiability of $\phi_g$ implies the unsatisfiability of $\groundinput$. We call $\phi_g$ an \emph{over-approximation} of $\groundinput$. 
The procedure $\groundAlg$ (Alg.~\ref{alg:ground}) recursively traverses the syntax tree of 
the input \tfol formula from top to bottom.
 
To eliminate the existential quantifier in $\exists o: r \cdot \phi'_f$  (L:\ref{ln:extdiscovery}), $\groundAlg$ creates a new relational object $o'$ of class $r$ (L:~\ref{ln:extnewClass}), and replaces $o$ with $o'$ in $\phi'_f$ (L:\ref{ln:extreplacement}). 
To eliminate the universal quantifier in $\forall o: r \cdot \phi'_f$ (L:~\ref{ln:unidiscovery}), $\groundAlg$ grounds the formula in $\domainUnder$. More specifically, $\groundAlg$ expands the quantifier into a conjunction of clauses where each clause is $o'.ext \Rightarrow \phi'_f[o \leftarrow o']$ (i.e., $o$ is replaced by $o'$ in $\phi'_f$) 
 for each relational object $o'$ of class $r$ in $\domainUnder$ (L:~\ref{ln:forall}).  Intuitively, an existentially quantified relational object is instantiated with a new relational object, and a universally quantified relational object is instantiated with every existing relational object of the same class in $\domainUnder$, which does not include the ones instantiated during $\groundAlg$. 


\begin{lemma} [Over-approximation Query] \label{lemma:overapprox}
For an \tfol formula $\groundinput$, and a domain $\domainUnder$, if $\phi_g = \groundAlg (\groundinput, \domainUnder)$ is UNSAT, then so is $\groundinput$. 
\end{lemma}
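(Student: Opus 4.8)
The plan is to prove the contrapositive of the statement, namely that if $\phi_f$ is satisfiable then the grounded formula $\phi_g = \groundAlg(\phi_f,\domainUnder)$ is satisfiable as well. By Proposition~\ref{prop:quantifier} the formula $\phi_f$ quantifies only over relational objects, and by Proposition~\ref{prop:QFfree} the formula $\phi_g$ is quantifier-free with finitely many object variables; hence the task reduces to turning a satisfying structure $M=(\mathcal{D},v^{*})$ of $\phi_f$ into a single valuation of the (finitely many) object variables occurring in $\phi_g$. I would first observe that, because $\groundAlg$ treats $\exists$ by introducing a fresh witness and $\forall$ by instantiating over $\domainUnder$, its soundness as an over-approximation relies on every quantifier occurring with positive polarity; I would therefore work with $\phi_f$ in negation normal form, so that negations are confined to atoms and the two grounding rules apply uniformly.

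The core is a structural induction on $\phi_f$ that builds the valuation for $\phi_g$ alongside the recursion of $\groundAlg$. I would strengthen the induction hypothesis to speak about subformulas under a partial substitution $\theta$ binding already-grounded object variables, carrying the invariant: whenever $M,\theta \models \psi$, the grounded subformula $\groundAlg(\psi\theta,\domainUnder)$ is satisfied by the valuation constructed so far. The base cases (equalities $t=t'$, inequalities $t>t'$, atoms $r(\bar t)$ and their negations) are immediate since $\groundAlg$ leaves them essentially unchanged and the witnessing values are read directly from $M$; the cases for $\wedge$ and $\neg$ on atoms follow because $\groundAlg$ commutes with $\wedge$ and fixes negated atoms. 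For $\exists o : Cls \cdot body$, satisfaction in $M$ yields a witness object $d\in\mathcal{D}$ of class $Cls$; I assign the fresh relational object $newR$ created by $\groundAlg$ the attribute values of $d$, and the claim follows from the inductive hypothesis applied to $body[o\leftarrow newR]$. For $\forall o : Cls \cdot body$, $M$ satisfies $body$ for every $Cls$-object, so in particular for the interpretations of the objects $r_i\in\domainUnder$, which is exactly the finite conjunction $\bigwedge_i \groundAlg(body[o\leftarrow r_i],\domainUnder)$ produced by the algorithm.

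The step I expect to be the main obstacle is the interaction between nested universal and existential quantifiers, because $\groundAlg$ recurses separately into each conjunct produced by a universal expansion and therefore manufactures a \emph{distinct} fresh witness for every instantiation $r_i$. To match this with $M$, the induction hypothesis must be indexed by the substitution $\theta$ so that the witness chosen for an inner $\exists$ is allowed to depend on the outer objects $r_i$ already fixed by $\theta$; this is precisely the semantics of a Skolem function evaluated at $r_i$, and getting this bookkeeping right is the delicate part. A clean way to package it is to note that $\groundAlg(\phi_f,\domainUnder)$ is equisatisfiable with the formula obtained by (i) Skolemizing the existentials of $\phi_f$ (satisfiability-preserving), (ii) instantiating the resulting purely universal formula over $\domainUnder$ (which yields a logical consequence of $\phi_f$, hence preserves unsatisfiability downward), and (iii) renaming the distinct ground Skolem terms to fresh constants (satisfiability-preserving). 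A secondary obstacle is justifying that each symbolic object $r_i\in\domainUnder$ in the universal case corresponds to a genuine $Cls$-object of $M$: this is where the guardedness of quantification and the class guard $o.name = Cls$ carried by $\groundAlg$ are needed, ensuring that the restricted conjunction is entailed by the full universal rather than introducing spurious obligations.
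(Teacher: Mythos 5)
Your proposal is correct in substance and reaches the same underlying goal as the paper's proof---transferring a model of $\phi_f$ into a satisfying assignment of the ground formula---but it is packaged quite differently. The paper argues by contradiction in two stages: it first builds a solution $v_g'$ for the intermediate grounding $\phi_g' = \groundAlg(\phi_f, D_{AU})$ over the model's \emph{own} domain $D_{AU}$ (existential witnesses in $NewRs$ are assigned the values of the witnessing objects of $v_f$), and then maps $v_g'$ to a solution of $\phi_g = \groundAlg(\phi_f, \domainUnder)$ by observing that objects in $D_{AU} - \domainUnder$ only remove conjuncts, while objects $r^- \in \domainUnder - D_{AU}$ are \emph{disabled}, $v_g(r^-) = \bot$, so the guarded instantiations $r^- \Rightarrow \cdots$ of L:\ref{ln:forall} in Alg.~\ref{alg:ground} hold vacuously. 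You instead do a single structural induction indexed by a substitution $\theta$, and your Skolemization packaging (Skolemize existentials, instantiate the universal prefix over $\domainUnder$, rename ground Skolem terms to fresh constants) is a cleaner and more standard route; notably, your handling of nested quantifiers---one fresh witness per universal instantiation, i.e., a Skolem function evaluated at each $r_i$---makes explicit a multiplicity issue that the paper's construction glosses over with a single per-expression identification $v_f(r) = v_g'(r')$. Your observation that soundness depends on quantifier polarity is also sharper than anything in the paper: a universal grounded by the $\forall$-rule under negation would genuinely break the lemma (an empty class in $\domainUnder$ yields a vacuously true conjunction, hence a false negation, even when $\phi_f$ is satisfiable), so restricting to positive occurrences, as the paper's translated formulas implicitly do, is a needed hypothesis the paper never states.

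Two bookkeeping corrections, however. First, in the universal case your stated resolution---that each symbolic $r_i \in \domainUnder$ must ``correspond to a genuine $Cls$-object of $M$''---is not the right fix and fails outright when $M$ has fewer $Cls$-objects than $\domainUnder$ (in particular, none): the correct mechanism is precisely the paper's disabling move, i.e., each grounded instance carries the presence literal $r_i$ as an antecedent, so any $r_i$ with no counterpart in $M$ is assigned not-present and its instances are satisfied vacuously. You gesture at the guard but frame it as ensuring correspondence rather than permitting vacuity. Second, rewriting $\phi_f$ into negation normal form changes the input to $\groundAlg$: the lemma is about $\groundAlg(\phi_f, \domainUnder)$ as computed on $\phi_f$ itself, and $\groundAlg(\mathrm{NNF}(\phi_f), \domainUnder)$ is a different formula (e.g., a negated existential grounds to a disabled fresh witness, while its NNF universal grounds to guarded instantiations), so you must either prove the lemma for the actual recursion of Alg.~\ref{alg:ground} on arbitrary-polarity input or argue that the formulas produced by $\translate$ already have all universals positive; as written, the NNF step proves a statement about a formula the algorithm never constructs.
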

  
\noindent
\textbf{Under-approximation.}
Let $\groundinput$ be an \tfol formula, and $\domainUnder$ be a domain. The over-approximation $\phi_g = \groundAlg(\groundinput, \domainUnder)$ contains a set of new relational objects introduced by $\groundAlg$ (L:\ref{ln:extnewClass}), denoted by $NewRs$. Let \textsc{NoNewR}($NewRs$, $\domainUnder$) be constraints that enforce that every new relational object $o_1$ in $NewRs$ be semantically equivalent to some relational objects $o_2$ in $\domainUnder$.
Formally: the predicate $\textsc{NoNewR}(NewRs, \domainUnder)$ is defined as $\bigwedge_{o_1 \in NewRs} \bigvee_{o_2 \in \domainUnder} (o_1 \equiv o_2)$, where the semantically equivalent relation between $o_1$ and $o_1$  (i.e., $o_1 \equiv o_2$) is defined as 
$\class(o_1) = \class(o_2)$ and $\bigwedge_{i=1}^{\iota(\class(o))}(o_1[i] = o_2[i]) \wedge o_1.ext = o_2.ext \wedge o_1.time = o_2.time$ (where the $\class (o)$  returns the class of $o$). 
Let  $\phi_g^{\bot} = \phi_g \wedge \textsc{NoNewR}(NewRs, \domainUnder).$ If $\phi_g^{\bot}$ has a satisfying solution, then there is a solution for $\groundinput$. We call $\phi_g^{\bot}$ an \emph{under-approximation} of $\groundinput$ and denote the procedure for computing it by $\textsc{UnderApprox}(\groundinput, \domainUnder)$.
\begin{lemma} [Under-Approximation Query] \label{lemma:underapprox}
For an \tfol formula $\groundinput$, and a domain $\domainUnder$, let $\phi_g = \groundAlg (\groundinput, \domainUnder)$ and $\phi_g^{\bot} = \underapproxFunc(\groundinput, \domainUnder)$. If $\sigma$ is a solution to $\phi_g^{\bot}$, then there exists a solution to $\groundinput$.
\end{lemma}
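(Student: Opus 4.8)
The plan is to show that, once the $\textsc{NoNewR}$ constraint is conjoined, $\phi_g^{\bot}$ becomes an \emph{exact} finite-domain encoding of $\phi_f$, namely the assertion that ``$\phi_f$ holds in the structure whose relational objects are exactly those of $\domainUnder$.'' Given this, a satisfying solution $\sigma$ of $\phi_g^{\bot}$ reads off directly as a model of $\phi_f$ over the domain $\domainUnder$, and since $\domainUnder \subseteq D_{AU}$ this is a legitimate FO domain, witnessing that $\phi_f$ is satisfiable. This contrasts with the over-approximation (Lemma~\ref{lemma:overapprox}), where the fresh objects in $NewRs$ are left unconstrained and so $\phi_g$ can only soundly report \emph{un}satisfiability; here the extra $\textsc{NoNewR}$ conjunct is precisely what turns the one-directional encoding into a faithful one.

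First I would fix notation: let $v_g^{\bot}$ be the valuation certified by $\sigma$, so $v_g^{\bot} \models \phi_g \wedge \textsc{NoNewR}(NewRs, \domainUnder)$. I then build a candidate valuation $v_f$ for $\phi_f$ whose relational objects are taken to be exactly those of $\domainUnder$, interpreting every $r \in \domainUnder$ (its $name$, $time$, and argument fields) as $v_g^{\bot}$ does. Because $\textsc{NoNewR}$ asserts $\bigwedge_{r \in NewRs}\bigvee_{r' \in \domainUnder}(r = r')$, every fresh object $r \in NewRs$ introduced by $\groundAlg$ for an existential (L:\ref{ln:extnewClass}) is, under $v_g^{\bot}$, field-for-field identical to some concrete $r' \in \domainUnder$; that $r'$ is the witness I assign to the corresponding existential of $\phi_f$.

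The core step is a structural induction over the AST of $\phi_f$, following the recursion of $\groundAlg$ (Alg.~\ref{alg:ground}), establishing $(\domainUnder, v_f) \models \phi_f$. The atomic, negation, and conjunction cases are immediate, since $\groundAlg$ is homomorphic on $\neg$ and $\wedge$ and leaves (in)equalities and field accesses untouched. For a universal node $\forall r : Cls \cdot body$, $\groundAlg$ emits the conjunction $\bigwedge_{r_i} body[r \leftarrow r_i]$ over exactly the class-$Cls$ objects $r_i$ of $\domainUnder$ (L:\ref{ln:forall}); as $v_f$'s domain \emph{is} $\domainUnder$, these $r_i$ are precisely the values the universal ranges over, so satisfaction of each conjunct (by the induction hypothesis) yields satisfaction of the universal. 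For an existential node $\exists r : Cls \cdot body$, $\groundAlg$ introduces $newR \in NewRs$ and asserts $body[r \leftarrow newR]$ (L:\ref{ln:extreplacement}); $\textsc{NoNewR}$ pins $v_g^{\bot}(newR)$ to some $r' \in \domainUnder$, and the induction hypothesis applied to the body (with $newR$ reinterpreted as $r'$) discharges the existential with witness $r'$.

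The hard part will be the existential case under nested quantification: I must ensure the witnesses selected by $\textsc{NoNewR}$ stay consistent as one substitutes an equated object $r'$ into a body that may itself contain further grounded quantifiers, so that the correspondence between $v_g^{\bot}$ and $v_f$ is never broken. I would handle this by carrying, as part of the induction hypothesis, the invariant that \emph{every} relational object named in a grounded subformula is interpreted by $v_g^{\bot}$ as a genuine element of $\domainUnder$ (either an original object or a fresh one pinned to one by $\textsc{NoNewR}$); the $\textsc{NoNewR}$ constraint is exactly what guarantees this invariant, so no grounded subformula ever refers to a domain element outside $\domainUnder$. Concluding the induction gives $(\domainUnder, v_f) \models \phi_f$, and therefore $\phi_f$ has a solution, as required.
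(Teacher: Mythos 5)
Your proposal is correct and takes essentially the same route as the paper's own (much terser) proof: both arguments rest on $\textsc{NoNewR}$ pinning each fresh existential witness in $NewRs$ to a concrete object of $\domainUnder$, together with the fact that the universal quantifiers were exhaustively instantiated over all of $\domainUnder$ (L:\ref{ln:forall}), so that a solution of $\phi_g^{\bot}$ is already a genuine model of $\phi_f$ over the domain $\domainUnder$. Your structural induction simply makes explicit the faithfulness of the grounding that the paper's two-sentence proof leaves implicit.
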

The proofs of Lemma~\ref{lemma:overapprox} and ~\ref{lemma:underapprox} are in Sec.~\ref{appendix:over}

\noindent
Suppose, for some domain $\domainUnder$, that an over-approximation query $\phi_g$ for an \tfol formula $\groundinput$ is satisfiable while the under-approximation query $\phi_g^{\bot}$ is UNSAT. 
Then, the solution to $\phi_g$ provides hints on how to expand $\domainUnder$ to potentially obtain a satisfying solution for $\groundinput$, as captured in Cor.~\ref{cor:DomainExpansion}.

\begin{corollary}[Necessary relational objects] \label{cor:DomainExpansion}
For an \tfol formula $\groundinput$ and a domain $\domainUnder$, let $\phi_g$ and $\phi_g^{\bot}$ be the over- and under-approximation queries of $\groundinput$ based on $\domainUnder$, respectively. Suppose $\phi_g$ is satisfiable and $\phi_g^{\bot}$ is UNSAT, then every solution to $\groundinput$ contains some relational object in formula $\phi_g$ but not in $\domainUnder$.
\end{corollary}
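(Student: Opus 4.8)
The plan is to argue by contradiction, reusing the two valuation constructions already developed in the proofs of Lemma~\ref{lemma:overapprox} and Lemma~\ref{lemma:underapprox}. First I would fix an arbitrary solution $v_f$ of $\phi_f$ and assume, toward a contradiction, that $v_f$ uses only relational objects already present in $\domainUnder$ --- that is, that $v_f$ witnesses $\phi_f$ without recourse to any object mentioned in $\phi_g$ but absent from $\domainUnder$. These absent objects are exactly the fresh objects $NewRs$ that $\groundAlg$ introduces when it eliminates the existential quantifiers (L:\ref{ln:extnewClass}), so the assumption amounts to saying that every existential subformula of $\phi_f$ has a witness inside $\domainUnder$.

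Next I would replay the forward construction from the proof of Lemma~\ref{lemma:overapprox}. Grounding instantiates each universally quantified subformula over all objects of $\domainUnder$ (L:\ref{ln:forall}), and every such instance holds under $v_f$ because $v_f \models \phi_f$; grounding replaces each existentially quantified subformula by a fresh object $r' \in NewRs$ (L:\ref{ln:extreplacement}), and since $v_f$ already supplies a witness $w \in \domainUnder$ for that existential, I build a grounded valuation $v_g$ that agrees with $v_f$ on the objects of $\domainUnder$ and sets $v_g(r') = v_f(w)$. This yields $v_g \models \phi_g$.

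The crucial step --- and the one that makes the corollary go through --- is that by construction each fresh object $r'$ has been identified with an object $w \in \domainUnder$, so $v_g$ validates the clause $\bigvee_{r'' \in \domainUnder}(r' = r'')$ for every $r' \in NewRs$; hence $v_g \models \textsc{NoNewR}(NewRs, \domainUnder)$ and therefore $v_g \models \phi_g^{\bot}$. This contradicts the hypothesis that $\phi_g^{\bot}$ is UNSAT. Consequently no solution of $\phi_f$ can be confined to $\domainUnder$: every solution must enable at least one object lying outside $\domainUnder$, and because such an object arises precisely as a witness that $\groundAlg$ realizes through some $NewR$, it is an object mentioned in $\phi_g$ but not in $\domainUnder$, which is the claim.

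I expect the main obstacle to be the bookkeeping in this final step: pinning down precisely why the ``extra'' object forced into every solution of $\phi_f$ must correspond to one of the named fresh objects in $NewRs$ rather than to some unrelated new object. The resolution is that $\groundAlg$ introduces a fresh object for exactly the existential witnesses, so the only way $v_f$ can escape $\domainUnder$ is by satisfying some existential subformula with a witness outside $\domainUnder$, and that witness is matched, class for class, by a corresponding $NewR$ in $\phi_g$. I would make this correspondence explicit by tracking, through $\textsc{Translate}$ and $\groundAlg$, which element of $NewRs$ is associated with each existential subformula, which also clarifies how the objects in $\phi_g \setminus \domainUnder$ serve as the hints for expanding $\domainUnder$.
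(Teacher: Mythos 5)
Your proof is correct, and it sits in the same approximation-based spirit as the paper's treatment; the difference is that the paper disposes of Corollary~\ref{cor:DomainExpansion} in a single sentence (``holds because $\phi_g$ and $\phi_g^{\bot}$ are over- and under-approximations of $\phi_f$''), whereas you supply a step that the two lemmas as stated do not literally provide. Lemma~\ref{lemma:overapprox} gives ($\phi_f$ SAT $\Rightarrow$ $\phi_g$ SAT) and Lemma~\ref{lemma:underapprox} gives ($\phi_g^{\bot}$ SAT $\Rightarrow$ $\phi_f$ SAT); the corollary additionally needs the converse of the latter restricted to confined solutions, namely: if $\phi_f$ has a solution all of whose existential witnesses lie, up to equality of attribute values, in $\domainUnder$, then $\phi_g^{\bot}$ is satisfiable. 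You prove exactly this missing direction by replaying the forward construction from the proof of Lemma~\ref{lemma:overapprox} and then observing that setting $v_g(r') = v_f(w)$ with $w \in \domainUnder$ makes the clause $\bigvee_{r'' \in \domainUnder}(r' = r'')$ of $\textsc{NoNewR}$ true for every $r' \in NewRs$, so the constructed valuation satisfies $\phi_g^{\bot}$ and contradicts its unsatisfiability. Your closing bookkeeping is also the right refinement: the hypothesis to refute is ``every existential subformula has a witness in $\domainUnder$'' rather than the superficially similar ``the solution enables no object outside $\domainUnder$'' (a solution may enable irrelevant extra objects), and the class-for-class correspondence between existential witnesses and the fresh objects created at L:\ref{ln:extnewClass}--\ref{ln:extreplacement} of Alg.~\ref{alg:ground} is what licenses reading the conclusion as ``an object mentioned in $\phi_g$ but not in $\domainUnder$.'' In short: the paper's justification is economical but, read strictly, does not follow from the lemma statements alone; your version makes the corollary self-contained and pins down the precise sense in which a solution of $\phi_g$ supplies the objects needed for domain expansion.
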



\begin{algorithm}[t]
            	\caption{\small $\mainAlg$: search for a bounded (by $\bound$) solution to $T(\negp)\bigwedge_{\psi \in \reqs} T(\psi)$.}
            	\small
            	 \hspace*{\algorithmicindent} {\scriptsize \textbf{Input} an MFOTL formula $\neg P$, and MFOTL requirements $\reqs = \{\psi_1, \psi_2, ...\}$ .\hfill\mbox{}}\\
            	  \hspace*{\algorithmicindent} {\scriptsize \textbf{Optional Input} $\bound$, the volume bound, and  data constraints $\dataDomainConstraint$.\hfill\mbox{}}\\
            	  \hspace*{\algorithmicindent} {\scriptsize \textbf{Output} a counterexample $\sigma$, UNSAT or bounded-UNSAT.\hfill\mbox{}}\\
            	  \vspace{-0.3in}
            	  \begin{multicols}{2}
            	\begin{algorithmic}[1]
            	    \State $\reqs_f \gets \{\ \psi_f = \translate(\psi) \; | \; \psi \in \reqs\} $ 
            	    \State $\negpf \gets \translate (\negp)$
            	    \State $\reqsUnder \gets \emptyset$ \label{ln:initReq} {\color{red}\textit{//initially empty requirement}}
            		\State $\domainUnder \gets \emptyset$ \label{ln:initDomain} {\color{red}\textit{//initially empty domain}}
            		\While {$\top$}  \label{ln:loopStart}
            		\State $\underinput \gets \negpf \wedge \reqsUnder$ 
            		\State $\phi_{g} \gets \groundAlg(\underinput, \domainUnder)$ {\color{red}\textit{//over-approx.}}\label{ln:computeOverground} 
            		\State $\phi_{g}^{\bot} \gets\underapproxFunc(\underinput, \domainUnder)$  {\color{red}\textit{//under-approx.}}\label{ln:computeUnderground}
            		\If {\textsc{Solve}($\phi_{g} \wedge \dataDomainConstraint$) = UNSAT}   \label{ln:overUNSAT}
            		\State \Return UNSAT \label{ln:returnUNSAT}
            		\EndIf \label{ln:domainCheckFinish}
            		\State $\sigma \gets \textsc{Solve}(\phi_{g}^{\bot} \wedge \dataDomainConstraint)$ \label{ln:solve2}
            		\If {$\sigma$ = UNSAT} {\color{red}\textit{//expand $\domainUnder$}}  \label{ln:expansionstart}         
            		        \State $\sigma_{min} \gets \textsc{Minimize}(\phi_g)$ \label{ln:minisolution}
            		        \State {\color{red} \textit{//expand based on $\sigma_{min}$}}
            		        \State $\domainUnder$ +=  $\{o \; | \; o \in \sigma_{min}\}$ \label{ln:expandingdomain}
            		        \If{$vol(\sigma_{min}) > \bound$ }   \label{ln:exceedsLimit}   
            		        \State \Return bounded-UNSAT \label{ln:returnboundedUNSAT}
            		        \EndIf 
            		\Else \label{ln:expansionfinish} {\color{red}\textit{//check all requirements}}
            		    \If {$\sigma \models \psi_f$ for $\psi_f \in \reqs_f$} \label{ln:ReqCheck}
            		    \State \Return $\sigma$ \label{ln:returnSAT}
            		    \Else \label{ln:underSAT}
            		    \State $lesson \gets \psi_f$ for some $\sigma \not\models \psi_f$  \label{ln:findviolatingrule}  
            		    \State $\reqsUnder$.add($lesson$) \label{ln:ruleRefinement}
            		    \EndIf \label{ln:ReqCheckFinish}
            		\EndIf 
              \EndWhile \label{ln:loopEnd}
            	\end{algorithmic} 
            	\end{multicols}
            	\label{alg:main}
            \end{algorithm} 


\begin{algorithm}[t]
                	\caption{\small $\groundAlg$: ground a NNF \tfol  formula $\groundinput$ in a domain $\domainUnder$.}
                	  \small
                	  \hspace*{\algorithmicindent} {\scriptsize \textbf{Input} an \tfol formula $\groundinput$ in NNF, and a domain of relational objects $\domainUnder$ .\hfill\mbox{}}\\
                	  \hspace*{\algorithmicindent} {\scriptsize \textbf{Output} a grounded quantifier-free formula  $\phi_g$ over relational objects.\hfill\mbox{}}\\
                	  \vspace{-0.13in}
                	\begin{algorithmic}[1]
                	    \If {match ($\groundinput$, $\exists o : r \cdot \phi'_f)$} \label{ln:extdiscovery} {\color{red}\textit{//process the existential operator}}
                	        \State $o' \gets \textsc{NewAct}(r)$  {\color{red}\textit{//create a new relational object of class $r$}}\label{ln:extnewClass}
                	        \State \Return $o'.ext \wedge \groundAlg$ ($\phi'_f$[$o \gets o'$], $\domainUnder$) \label{ln:extreplacement}
                	    \EndIf
                	    \If {match ($\groundinput$, $\forall o : r \cdot \phi'_f)$} \label{ln:unidiscovery} {\color{red}\textit{//process the universal operator }}
                	        \State \Return \begin{varwidth}[t]{\linewidth}
                	                    $\bigwedge_{[o':r] \in \domainUnder}$ $o'.ext\Rightarrow$ $\groundAlg$ ($\phi'_f$[$o \gets o'$], $\domainUnder$) \label{ln:forall}
                	                    \end{varwidth}
                	    \EndIf
                        \If {match ($\groundinput$, $\phi'_f \;op\; \psi'_f$ where $op = \wedge \mid \vee$)} {\Return $\groundAlg(\phi'_f, \domainUnder) \; op \; \groundAlg(\psi'_f, \domainUnder)$} \EndIf
                        \State \Return $\groundinput$ {\color{red}\textit{//case where $\groundinput$ is quantifier-free, including $\neg \phi'_f$ where $\phi'_f$ is atomic (NNF)}}
                	\end{algorithmic} 
                	\label{alg:ground}
                \end{algorithm}

\vspace{-0.15in}
\subsection{Counterexample-Guided Constraint Solving Algorithm} \label{subsec:main}
Let an \mfotl formula $\negp$ (to find a satisfiable counterexample to $P$), a set of \mfotl requirements $\reqs$, an optional volume bound $\bound$, and optionally a set of \tfol data domain constraints $\dataDomainConstraint$ be given. 
$\mainAlg$, shown in Alg.~\ref{alg:main}, searches for a solution $\sigma$ to $\negp \wedge \bigwedge_{\psi \in \reqs} \psi$
(with respect to $\dataDomainConstraint$) bounded by $\bound$, as a counter-example to $\bigwedge_{\psi \in \reqs} \psi \Rightarrow P$ (Def.~\ref{def:MFOTLMTC}). 
bounded by $\bound$. If no such solution is possible regardless of the bound, $\mainAlg$ returns UNSAT. If no solution can be found within the given bound, but a solution may exist for a larger bound, then $\mainAlg$ returns bounded-UNSAT. If $\bound$ is not specified, $\mainAlg$ will perform the search unboundedly until a solution or UNSAT is returned.

$\mainAlg$ first translates $\negp$ and every $\psi \in \reqs$ into \tfol formulas in $\reqs_f$, denoted by $\negpf$ and $\psi_f$, respectively. Then $\mainAlg$ searches for a satisfying solution to $\negpf \wedge \bigwedge_{\psi_f \in \reqs_f} \psi_f$ in the domain $\domain$ of volume, which is at most $\bound$.
Instead of searching in $\domain$ directly, $\mainAlg$ searches for a solution to $\negpf \wedge \bigwedge_{\psi_f \in \reqsUnder} \psi_f$ in $\domainUnder$ (denoted by $\underinput$) where $\reqsUnder \subseteq \reqs_f$ and $\domainUnder \subseteq \domain$.
$\mainAlg$ initializes $\reqsUnder$ and $\domainUnder$ as empty sets (LL:\ref{ln:initReq}-\ref{ln:initDomain}). Then, for the \tfol formula $\underinput$, $\mainAlg$ creates an over- and under-approximation query $\phi_g$ (L:\ref{ln:computeOverground}) and $\phi_g^{\bot}$ (L:\ref{ln:computeUnderground}), respectively (described in Sec.~\ref{subsec:OverandUnder}). 
$\mainAlg$ first solves the over-approximation query $\phi_g$ by querying an SMT solver (L:\ref{ln:overUNSAT}). If $\phi_g$ is unsatisfiable, then $\underinput$ is unsatisfiable (Lemma~\ref{lemma:overapprox}), and $\mainAlg$ returns UNSAT (L:\ref{ln:returnUNSAT}). \\
\indent
If $\phi_g$ is satisfiable, then $\mainAlg$ solves the under-approximation query $\phi_g^{\bot}$ (L:\ref{ln:solve2}). If $\phi_g^{\bot}$ is unsatisfiable, then the current domain $\domainUnder$ is too small, and $\mainAlg$ expands it (LL:\ref{ln:expansionstart}-\ref{ln:expansionfinish}).
This is because the satisfiability of $\phi_g$ indicates the possibility of finding a satisfying solution after adding at least one of the new relational objects in the solution to $\phi_g$ to $\domainUnder$ (Cor.~\ref{cor:DomainExpansion}).
The domain $\domainUnder$ is expanded by adding all relational objects $o'$ in the minimum (in terms of volume) solution $\sigma_{min}$ to $\phi_g$ (L:\ref{ln:minisolution}). 
To obtain $\sigma_{min}$, we follow MaxRes~\cite{Narodytska-et-al-14} methods: we analyze the UNSAT core of $\phi_g^{\bot}$ and incrementally weaken $\phi_g^{\bot}$ towards $\phi_g$ (i.e., the weakened query $\phi_g^{\bot'}$ is an ``over-under approximation'' that satisfies $\phi_g^{\bot} \Rightarrow \phi_g^{\bot'} \Rightarrow \phi_g$) until a satisfying solution $\sigma_{min}$ is obtained for the weakened query. 
However, if the volume of $\sigma_{min}$ exceeds $\bound$ (L:\ref{ln:exceedsLimit}), then bounded-UNSAT is returned (L:\ref{ln:returnboundedUNSAT}). UNSAT core-guided domain expansion has also been explored for unfolding the definition of recursive functions~\cite{DBLP:conf/cade/PassmoreCIABKKM20,DBLP:conf/sas/SuterKK11}. \\ 
\indent
On the other hand, if  $\phi_{g}^{\bot}$ yields a solution $\sigma$, then $\sigma$ is checked on $\reqs_f$ (L:\ref{ln:ReqCheck}). If $\sigma$ satisfies every $\psi_f$ in $\reqs_f$, then $\sigma$ is returned (L:\ref{ln:returnSAT}). If $\sigma$ violates some requirements in $\reqs_f$, then the violating requirement $\textit{lesson}$ is added to $\reqsUnder$ to be considered in the search for the next solutions (L:\ref{ln:ruleRefinement}). \\
\indent
If $\mainAlg$ does not find a solution or does not return UNSAT, it means that no solution is found because $\domainUnder$ is too small or $\reqsUnder$ are too weak. $\mainAlg$ then restarts with the expanded domain $\domainUnder$ or the refined set of requirements $\reqsUnder$. It computes the over- and under-approximation queries ($\phi_g$ and $\phi_g^{\bot})$ again, and repeats the steps.
See Sec.~\ref{subsec:example} for an illustration of $\mainAlg$. 

\begin{remark} 
\label{remark:opt}
$\mainAlg$ finds the optimal solution because it looks for the minimum solution $\sigma_{min}$ to the over-approximation query $\phi_g$ (L:\ref{ln:minisolution}) and uses it for domain expansion (L:\ref{ln:expandingdomain}). However, looking for $\sigma_{min}$ adds cost. If solution optimality is not required, $\mainAlg$ can be configured to heuristically find a solution $\sigma$ to $\phi_g$ such that $vol(\sigma) \le \bound$. The \textit{greedy best-first} search (gBFS) finds a solution to $\phi_g$ that minimizes the number of relational objects that are not already in $\domainUnder$, and then uses it to expand $\domainUnder$. 
We configured a non-optimal version of $\mainAlg$ ($\noopt$) that uses gBFS heuristics and evaluated its performance in Sec.~\ref{sec:evaluation}.
\end{remark}

\subsection{Illustration of $\mainAlg$}
\label{subsec:example}
\nf{Suppose a data collection centre (DCC) \textit{collect}s and \textit{access}es personal data information with two requirements: $\textit{req}_1$: data value can only be updated after having been collected or last updated for more than a week (168 hours); and $\textit{req}_2$:  data can only be accessed if has been collected or updated within a week (168 hours). 
The signature $S_{\textit{data}}$ for DCC contains three binary relations ($R_{\textit{data}}$): \textit{Collect}, \textit{Update}, and \textit{Access}, such that \textit{Collect}($d$, $v$), \textit{Update}($d$, $v$) and \textit{Access}($d$, $v$) hold at a given time point if and only if data at ID $d$ is collected, updated, and accessed with value $v$ at this time point, respectively. The \mfotl formulas for $P1$, $\textit{req}_1$ and $\textit{req}_2$ are shown in Fig.~\ref{fig:formulas}. Suppose $\mainAlg$ is invoked to find a counterexample for property $P1$ (shown in Fig.~\ref{fig:formulas}) subject to requirements $\reqs = \{\textit{req}_1, \textit{req}_2\}$ with the bound $\bound = 4$. $\mainAlg$ translates the requirements and the property to \tfol and initializes $\reqsUnder$ and $\domainUnder$ to empty sets. For each iteration, we use $\phi_g$ and $\phi_g^{\bot}$ to represent the over- and under-approximation queries computed on LL:\ref{ln:computeOverground}-\ref{ln:computeUnderground}, respectively.}

\nf{\underline{1st iteration:} $\domainUnder = \emptyset$ and $\reqsUnder = \emptyset$. Three new relational objects are introduced to $\phi_g$ (due to $\neg P1$): $\textit{access}_1$, $\textit{collect}_1$, and $\textit{update}_1$ such that: (C1) $\textit{access}_1$ occurs after $\textit{collect}_1$ and $\textit{update}_1$;(C2) $\textit{access}_1.d = \textit{collect}_1.d = \textit{update}_1.d$;(C3) $\textit{access}_1.v \neq \textit{collect}_1.v \wedge \textit{access}_1.v \neq \textit{update}_1.v$; and (C4) either $\textit{collect}_1$ or $\textit{update}_1$ must be in the solution. $\phi_g$ is satisfiable, but $\phi_{g}^{\bot}$ is UNSAT since $\domainUnder$ is an empty set. We assume $\domainUnder$ is expanded by adding $\textit{access}_1$ and $\textit{update}_1$. }

\nf{
\underline{2nd iteration:} $\domainUnder = \{\textit{access}_1, \textit{update}_1\}$ and $\reqsUnder = \emptyset$. The over-approximation $\phi_{g}$ stays the same, but $\phi_g^{\bot}$ becomes satisfiable since $\textit{access}_1$ and $\textit{update}_1$ are in $\domainUnder$. Suppose the solution is $\sigma_4$ (see Fig.~\ref{fig:traces}). However, $\sigma_4$ violates $\textit{req}_2$, so $\textit{req}_2$ is added to $\reqsUnder$.
}

\nf{
\underline{3rd iteration:} $\domainUnder = \{\textit{access}_1, \textit{update}_1\}$ and $\reqsUnder = \{\textit{req}_2\}$.  Two new relational objects are introduced in $\phi_g$ (due to $\textit{req}_2$): $\textit{collect}_2$ and $\textit{update}_2$ such that (C5) $\textit{collect}_2.time \le \textit{access}_1.time \le \textit{collect}_2.time + 168$; (C6) $\textit{update}_2.time \le \textit{access}_1.time \le \textit{update}_2.time + 168$; (C7) 
$\textit{access}_1.d = \textit{collect}_2.d = \textit{update}_2.d$; (C8) $\textit{access}_1.v = \textit{collect}_2.v = \textit{update}_2.v$; and (C9) $\textit{collect}_2$ or $\textit{update}_2$ is in the solution. The new $\phi_g$ is satisfiable, but $\phi_g^{\bot}$ is UNSAT because $\textit{update}_2 \not\in \domainUnder$ and $\textit{update}_1 \neq \textit{update}_2$ (C8 conflicts with C3). Therefore, $\domainUnder$ needs to be expanded.  Assume $\textit{collect}_2$ is added to $\domainUnder$.
}

\nf{
\underline{4th iteration:} $\domainUnder = \{\textit{access}_1, \textit{update}_1, \textit{collect}_2\}$ and $\reqsUnder = \{\textit{req}_2\}$. The over-approximation $\phi_g$ stays the same, but $\phi_g^{\bot}$ becomes satisfiable since $\textit{collect}_2$ is in $\domainUnder$. Suppose the solution is $\sigma_3$ (see Fig.~\ref{fig:traces}). Since $\sigma_3$ violates $\textit{req}_1$, $\textit{req}_1$ is added to $\reqsUnder$.
}

\nf{
\underline{5th iteration:} $\domainUnder = \{\textit{access}_1, \textit{update}_1, \textit{collect}_2\}$ and $\reqsUnder = \{\textit{req}_1, \textit{req}_2\}$.  The following constraints are added to $\phi_g$ (due to $\textit{req}_1$): (C9) $\neg (\textit{update}_2.time -168 \le \textit{collect}_1.time \le \textit{update}_2.time)$. Since (C9) conflicts with (C8), (C7) and (C1), $\textit{update}_2$ cannot be in the solution to $\phi_g$. The over-approximation $\phi_g$ is satisfiable if $\textit{collect}_1$ (introduced in the 1st iteration) or $\textit{update}_2$ (3rd iteration) are in the solution. However, $\phi_g^{\bot}$ is UNSAT since $\domainUnder$ does not contain $\textit{collect}_1$ or $\textit{update}_2$. Thus, $\domainUnder$ is expanded.  
Assume $\textit{update}_2$ is added to $\domainUnder$.
}

\nf{
\underline{6th iteration:} $\domainUnder = \{\textit{access}_1, \textit{update}_1, \textit{collect}_2, \textit{update}_2\}$,  $\reqsUnder = \{\textit{req}_1, \textit{req}_2\}$. The following constraints are added to $\phi_g$ (C10) $\textit{update}_2.time \ge \textit{update}_1.time + 168$ (due to $req_1$) and (C11) $\textit{update}_2.time \le $  $\textit{update}_1.time$ (due to $\neg P$). Since (C10) conflicts with (C11), $\textit{update}_2$ cannot be in the solution to $\phi_g$. Thus, $\phi_g$ is satisfiable only if $\textit{collect}_1$ is in the solution. However, $\phi_g^{\bot}$ is UNSAT because $\textit{collect}_1 \not\in \domainUnder$. Therefore,  $\domainUnder$ is expanded by adding $\textit{collect}_1$. 
}

\nf{
\underline{final iteration:} $\domainUnder = \{\textit{access}_1, \textit{update}_1, \textit{collect}_2, \textit{update}_2,$ $\textit{collect}_1\}$ and $\reqsUnder = \{\textit{req}_1, \textit{req}_2\}$. The under-approximation $\phi_{g}^{\bot}$ becomes satisfiable, and yields the solution $\sigma_5$ in Fig.~\ref{fig:traces} which satisfies both $\textit{req}_1$ and $\textit{req}_2$.
}
\vspace{-0.15in}
\section{Evaluation}
\label{sec:evaluation}
To evaluate our approach, we developed a prototype tool, \nf{called {\sffamily LEGOS}}, that implements our \mfotl bounded satisfiability checking algorithm, $\mainAlg$ (Alg.~\ref{alg:main}).
It includes Python API for specifying system requirements and \mfotl safety properties. 
We use pySMT~\cite{pysmt2015} to formulate SMT queries and Z3~\cite{Leonardo-Nikolaj-2008} to check their satisfiability. The implementation and the evaluation artifacts are included in the supplementary material~\cite{tacas-supp}. 
In this section, we evaluate the effectiveness of our approach using five case studies, aiming to answer the following research question: \emph{How effective is our approach at determining the bounded satisfiability of \mfotl formulas?}
We measure effectiveness in terms of the ability to determine satisfiability (i.e., the satisfying solution and its volume, UNSAT, or bounded UNSAT), and performance, i.e., time and memory usage.

\vskip 0.05in
\noindent
{\bf Cases studies.}
The five case studies considered in this paper are summarized below:
(1) PHIM (derived from~\cite{Feng-Marsso-Garavel-21,Arfelt-Basin-Debois-19}): a computer system for keeping track of personal health information with cost management; 
(2) CF@H\footnote[1]{\url{https://covidfreeathome.org/}}: a system for monitoring COVID patients at home and enabling doctors to monitor patient data; 
(3) PBC~\cite{Basin-15}: an approval policy for publishing business reports within a company; 
(4) BST~\cite{Basin-15}: a banking system that processes customer transactions; and  (5)~NASA~\cite{Mattarei-15}: an automated air-traffic control system design that aims to avoid aircraft collisions. \footnote[2]{The requirements and properties for the NASA case study are originally expressed in LTL, which is subsumed by \mfotl.}  
Tbl.~\ref{tab:stats} gives their statistics.   For each case study, we record the number of requirements, relations, relation arguments, and properties, denoted as $\#reqs$, $\#rels$, $\#args$, and $\#props$, respectively. Additionally, Tbl.~\ref{tab:stats} shows initial configurations used in our experiments, with
 number of custodians ($\#c$), patients ($\#p$), and data ($\#d$) for PHIM; number of users ($\#u$), and data ($\#d$) for CF@H and PBC; number of employees ($\#e$), customers ($\#c$), transactions ($\#t$), and the maximum amount for a transaction ($\textit{sup}$) for BST; number of ground-separated  ($\#GSEP$) and of the self-separating aircraft ($\#SSEP$) for NASA. 

\begin{minipage}{0.37\linewidth}
\begin{table}[H]
    \centering
            \scalebox{0.585}{
                \begin{tabular}{|c|c|c|c|c|c|}
                    \hline
                     \multirow{2}{*}{names} & \multicolumn{4}{|c|}{case study statistics} & \multirow{2}{*}{configuration} \\\cline{2-5}
                     & $\#reqs$ & $\#rels$ & $\#args$ & $\#props$ &  \\ \hline
                     \multirow{2}{*}{PHIM} & \multirow{2}{*}{18} & \multirow{2}{*}{22} & \multirow{2}{*}{$[1-4]$} & \multirow{2}{*}{6} & $\textit{\#c}=2$, $\textit{\#p}=2$\\
                         &    &   &         &   & $\textit{\#d}=5$
                     \\ \hline
                     CF@H & 45 & 28 & $[2-3]$ & 7 & $\textit{\#u}=2$, $\textit{\#d}=10$
                     \\ \hline
                     PBC & 14 & 7 & $[1-2]$ & 1 & $\textit{\#u}=5$, $\textit{\#d}=10$
                     \\ \hline
                     \multirow{2}{*}{BST} & \multirow{2}{*}{10} & \multirow{2}{*}{3} & \multirow{2}{*}{$[1-3]$} & \multirow{2}{*}{3} & $\textit{\#e}=1$, $\textit{\#c}=2$\\
                         &    &   &         &   & $\textit{\#t}=4$, $\textit{sup}=10$
                     \\ \hline
                     \multirow{4}{*}{NASA} & \multirow{4}{*}{194} & \multirow{4}{*}{10} & \multirow{4}{*}{$[6-79]$} & \multirow{4}{*}{6} & $\#GSEP=3$ \\
                         &    &   &         &   & $\#SSEP=0$\\
                         &    &   &         &   & $\#GSEP=2$\\
                         &    &   &         &   & $\#SSEP=2$
                     \\ \hline
                \end{tabular}\label{tab:models}
            }
    \caption{{\footnotesize Case study statistics. 
    }}
    \label{tab:stats}
\end{table}
\end{minipage} ~~
\begin{minipage}{0.6\linewidth}
\begin{table}[H]
    \centering
        \scalebox{0.68}{
            \begin{tabular}{|c||c|c|c|c|c|c|c|c|c|c|c|c|c|}
                 \toprule
                 NASA & \multicolumn{6}{|c|}{configuration 1} & & \multicolumn{6}{|c |}{configuration 2} \\
                 \cline{2-7}\cline{9-14}
                 & \multicolumn{3}{|c|}{$\mainAlg$} & \multicolumn{3}{|c|}{nuXmv} & & \multicolumn{3}{|c |}{$\mainAlg$} &  \multicolumn{3}{|c|}{nuXmv} \\
                 \cline{2-7}\cline{9-14}
                  & \multirow{2}{*}{out.} & time &   mem. & \multirow{2}{*}{out.} & time & mem.  ~&  & \multirow{2}{*}{out.} & time &  mem. &  \multirow{2}{*}{out.} & time & mem. \\ 
                   & & (sec) & (MB) &  & (sec) & (MB)  &~& & (sec) & (MB) &  & (sec) & (MB) \\
                  \cmidrule[1.5pt]{0-6}\cmidrule[1.5pt]{9-14}
                   $na_1$ & U& 0.80  & 154  &  U &  0.88 & 82 & & U & 0.13  & 141 & U & 1.65 & 90 \\\cline{2-7}\cline{9-14}
                   $na_2$ & U& 0.16 & 141 & U &  0.47 & 70 &  & U & 0.15 & 141 & U& 1.50 & 90 \\\cline{2-7}\cline{9-14}
                   $na_3$ & U& 0.16 & 141 & U & 0.49 & 83 &  & U & 0.13 & 141 & U& 1.48 & 90 \\\cline{2-7}\cline{9-14}
                   $na_4$ & U& 0.77 & 80  & U & 0.54 & 83 &  & U & 0.15 & 66 & U & 1.43 & 91 \\\cline{2-7}\cline{9-14}
                   $na_5$ & U& 0.14 & 140 & U & 0.52 & 82 & & U & 0.15 & 141 & U& 1.43 & 90 \\\cline{2-7}\cline{9-14}
                   $na_6$ & U& 0.03 & 62  & U & 0.57 & 72  & & U & 0.03 & 62 & U & 1.40 & 90
                 \\\cline{2-7}\cline{9-14}
                 \bottomrule
            \end{tabular} 
            } 
    \caption{{\footnotesize Performance comparison between  $\mainAlg$ and nuXmv on case study NASA. 
    }}
    \label{tab:effectiveness-2}
\end{table}
\end{minipage}

Case studies were selected for (i) the purpose of comparison with existing works (i.e., NASA); (ii) checking whether our approach scales with case studies involving data/time constraints (PBC, BST, PHIM and CF@H); or (iii) evaluating the applicability of our approach with real-word case studies (CF@H and NASA). 
In addition to prior case studies, we include PHIM and CF@H which have complex data/time constraints.
The number of requirements for the five case studies ranges  between ten (BST) and 194 (NASA).
The number of relations present in the \mfotl requirements ranges from three (BST) to 28 (CF@H), and the number of arguments in these relations ranges from 1 (PHM, PBC, and BST) to 79 (NASA).

\vskip 0.05in
\noindent
{\bf Experimental setup.}
Given a set of requirements, data constraints and properties of interest for each case study,
we measured the run-time (time) and peak memory usage (mem.) of performing bounded satisfiability checking of \mfotl properties, and the volume $vol_\sigma$ (the number of relational objects) of the solution ($\sigma$) with ($\opt$) and without ($\noopt$) the optimality guarantees (see Remark~\ref{remark:opt} for finding non-optimal solutions). 
We conduct two experiments: the first one evaluates the efficiency and scalability of our approach; the second one compares our approach with satisfiability checking.
Since there is no existing work for checking \mfotl satisfiability, we compared with LTL satisfiability checking because \mfotl subsumes LTL.
To study the scalability of our approach, our first experiment considers four different configurations obtained by increasing the data constraints of the case-study requirements.
The initial configuration (small) is described in Tbl.~\ref{tab:stats} and the initial bound is 10. The medium and large configurations are obtained by multiplying the initial data constraints and volume bound by ten and hundred, respectively. The last (unbounded) configuration does not bound either the data domain or the volume. 
\nf{As we noted earlier in Sec.~\ref{sec:approach}, the purpose of adding data constraints is to avoid unrealistic counterexamples. For example, the NASA case study uses a data set for specifying the possible system control modes and uses data ranges to restrict the possible measures from the aircraft (e.g., aircraft's trajectory). In the other case studies, data constraints are realistic data ranges (e.g., a patient's account balance should be non-negative).}
To study the performance of our approach relative to existing work, our second experiment considers two configurations of the NASA case study verified in \cite{Li-et-al-19} using the state-of-the-art symbolic model checker nuXmv~\cite{DBLP:conf/cav/CavadaCDGMMMRT14}\footnote{\nf{{\sffamily LEGOS} solved all configurations from the NASA case study; see the results in \cite{tacas-supp}. For comparison, we report only on the configurations that are explicitly supported by nuXmv.}}. 
We compare our approach's result against the reproduced result of nuXmv verification. 
For both experiments, we report the analysis outcomes, i.e., the volume of the satisfying solution (if one exists), UNSAT, or bounded UNSAT; and performance, i.e., time and memory usage. 
The experiments were conducted using a ThinkPad X1 Carbon with an Intel Core i7 1.80 GHz processor, 8 GB of RAM, and running 64-bit Ubuntu GNU/Linux 8. 

\begin{table}[t]
    \centering
            \scalebox{0.6}{
            \begin{tabular}{|c c||c|c|c|c|c|c|c|c|c|c|c|c|c|c|c|}
                 \toprule
                  \multicolumn{2}{|c}{case studies} & \multicolumn{3}{|c|}{small} & & \multicolumn{3}{|c|}{medium} & & \multicolumn{3}{|c |}{big} & & \multicolumn{3}{|c|}{unbounded} \\\cline{3-5}\cline{7-9}\cline{11-13}\cline{15-17}
                  \multicolumn{2}{|c |}{} & \multirow{2}{*}{out.} & time &   mem. &~ & \multirow{2}{*}{out.} & time & mem.  ~&  & \multirow{2}{*}{out.} & time &  mem. &  ~& \multirow{2}{*}{out.} & time & mem. \\ 
                  \multicolumn{2}{|c |}{} & & (sec) & (MB) &~ &  & (sec) & (MB)  &~& & (sec) & (MB) &  ~& & (sec) & (MB) \\
                  \cline{3-5}\cline{7-9}\cline{11-13}\cline{15-17}
                  \multicolumn{2}{|c |}{} & $\noopt$ $|$ $\opt$  & $\noopt$  $|$ $\opt$  &  $\noopt$  $|$ $\opt$  & & $\noopt$  $|$ $\opt$  & $\noopt$  $|$ $\opt$  & $\noopt$  $|$ $\opt$  &  & $\noopt$  $|$ $\opt$  & $\noopt$  $|$ $\opt$  & $\noopt$  $|$ $\opt$  &  & $\noopt$  $|$ $\opt$  & $\noopt$  $|$ $\opt$  & $\noopt$  $|$ $\opt$  \\\cmidrule[1.5pt]{1-3}\cmidrule[1.5pt]{3-5}\cmidrule[1.5pt]{7-9}\cmidrule[1.5pt]{11-13}\cmidrule[1.5pt]{15-17}
                 \multirow{7}{*}{PHIM}& $ph_1$ & U & 0.04 $|$ 0.03 & 29 $|$ 29 & & U& 0.03 $|$ 0.03 & 136 $|$ 136 & & U& 0.04 $|$ 0.04 & 136 $|$ 136 & & U& 0.06 $|$ 0.05 & 64 $|$ 64 \\\cline{3-5}\cline{7-9}\cline{11-13}\cline{15-17}
                 & $ph_2$ & U & 0.03 $|$ 0.03 & 138 $|$ 138 & & U& 0.03 $|$ 0.03 & 136 $|$ 137 & & U& 0.03 $|$ 0.04 & 136 $|$ 136 & & U& 0.05 $|$ 0.06 & 64 $|$ 61 \\\cline{3-5}\cline{7-9}\cline{11-13}\cline{15-17}
                  & $ph_3$ & U& 0.03 $|$ 0.03 & 134 $|$ 137 & & U& 0.03 $|$ 0.03 & 138 $|$ 138 & & U& 0.05 $|$ 0.05 & 137 $|$ 138 & & U& 0.06 $|$ 0.06 & 64 $|$ 64 \\\cline{3-5}\cline{7-9}\cline{11-13}\cline{15-17}
                 & $ph_4$ & U& 0.04 $|$ 0.04 & 136 $|$ 138 & & U& 0.04 $|$ 0.04 & 138 $|$ 135 &  & U& 0.05 $|$  0.05 & 138 $|$ 138 & & U& 0.06 $|$ 0.07 & 64 $|$ 64\\\cline{3-5}\cline{7-9}\cline{11-13}\cline{15-17}
                 & $ph_5$ & U& 0.02 $|$ 0.02 & 135 $|$ 135 & & U& 0.02 $|$ 0.02 & 608  $|$ 608 & & 56 $|$ 56 & 30.51 $|$ 30.51 & 390 $|$ 390 & & 56 $|$ 56 & 21.64 $|$ 21.60 & 393 $|$ 390 \\\cline{3-5}\cline{7-9}\cline{11-13}\cline{15-17}
                 & $ph_6$ & b-U& 0.18 $|$ 0.20 & 139 $|$ 139 & & U& 0.72 $|$ 0.82 & 144 $|$ 144 & & U& 0.88 $|$ 0.70 & 142 $|$ 142 & & U& 0.91  $|$ 0.91 & 70 $|$ 70 \\\cline{3-5}\cline{7-9}\cline{11-13}\cline{15-17}
                 & $ph_7$ & U & 0.11 $|$ 0.11 & 139 $|$ 139 & & 29 $|$ 29 & 13.80 $|$ 1905.40 & 193 $|$ 599 & & \textbf{30 $|$ 29} & 20.25 $|$ 682.22 & 193 $|$ 601 & & \textbf{32  $|$ 29} & 20.96  $|$ 1035.87 & 123 $|$ 383 \\
                 \cmidrule[1.5pt]{1-3}\cmidrule[1.5pt]{3-5}\cmidrule[1.5pt]{7-9}\cmidrule[1.5pt]{11-13}\cmidrule[1.5pt]{15-17}
                  \multirow{7}{*}{CF@H} & $cf_1$ &  b-U & 4.80 $|$ 6.90 & 114 $|$ 176 & & U& 2.87 $|$ 3.55 & 81 $|$ 86 & & U& 2.98 $|$ 1.71 & 85 $|$ 76 & & U& 1.71 $|$ 0.74 & 74 $|$ 68 \\\cline{3-5}\cline{7-9}\cline{11-13}\cline{15-17}
                  & $cf_2$ & b-U & 0.87 $|$ 0.93 & 70 $|$ 70 & & 14 $|$ 14 & 3.21 $|$ 425.41 & 79 $|$ 334 & & 14 $|$ 14 & 2.40 $|$ 778.36 & 76 $|$ 80 & & 14 $|$ 14 & 3.32 $|$ 16.97 & 80 $|$ 205 \\\cline{3-5}\cline{7-9}\cline{11-13}\cline{15-17}
                  & $cf_{3}$ & b-U & 1.38 $|$ 1.31 & 145 $|$ 145 & & 16 $|$ 16 & 6.05 $|$ 90.78 & 168 $|$ 403 & & 16 $|$ 16 & 3.54 $|$ 371.65 & 157 $|$ 846 & & 16 $|$ 16 & 5.35 $|$ 24.07 & 86 $|$ 164
                 \\\cline{3-5}\cline{7-9}\cline{11-13}\cline{15-17}
                  & $cf_{4}$ & b-U & 1.52 $|$ 0.73 & 74 $|$ 68 & & 14 $|$ 14 & 4.54  $|$ 65.59 & 90  $|$ 261 & & 14 $|$ 14 & 5.63  $|$ 57.30 & 95 $|$ 261 & & 14 $|$ 14 & 5.65  $|$ 1227.02 & 89 $|$ 294 
                 \\\cline{3-5}\cline{7-9}\cline{11-13}\cline{15-17}
                  & $cf_{5}$ &  8 $|$ 8 & 1.20 $|$ 1.17  & 146 $|$ 147 & & 8 $|$ 8 & 0.48 $|$ 0.54 & 141 $|$ 142 & & 8 $|$ 8 & 0.69 $|$ 0.57 & 141 $|$ 141 & & 8 $|$ 8 & 0.72 $|$ 0.76 & 69 $|$ 69 
                 \\\cline{3-5}\cline{7-9}\cline{11-13}\cline{15-17}
                  & $cf_{6}$ &  8 $|$ 8 & 1.06 $|$ 1.16 & 146 $|$ 147 & &  8 $|$ 8 & 0.52 $|$ 0.61 & 142 $|$ 142 & & 8 $|$ 8 & 0.60 $|$ 0.73 & 141 $|$ 141 & & 8 $|$ 8 & 0.72 $|$ 0.72 & 69 $|$ 69
                 \\\cline{3-5}\cline{7-9}\cline{11-13}\cline{15-17}
                  & $cf_{7}$ & U & 0.58 $|$ 0.58 & 141 $|$ 142 & & U & 0.38 $|$ 0.36 & 140 $|$ 141 & & U& 0.47 $|$ 0.44 & 140 $|$ 141 & & U& 0.30 $|$ 0.34 & 66 $|$ 67
                 \\\cmidrule[1.5pt]{1-3}\cmidrule[1.5pt]{3-5}\cmidrule[1.5pt]{7-9}\cmidrule[1.5pt]{11-13}\cmidrule[1.5pt]{15-17}
                 PBC & $pb_1$ & U& 0.04 $|$ 0.04 & 29 $|$ 140 & & U& 0.16 $|$ 0.17 & 140 $|$ 139 & & 9 $|$ 9 & 0.28 $|$ 0.29 & 141 $|$ 141 & & 9 $|$ 9 & 0.27 $|$ 0.28  & 67 $|$ 67\\
                 \cmidrule[1.5pt]{1-3}\cmidrule[1.5pt]{3-5}\cmidrule[1.5pt]{7-9}\cmidrule[1.5pt]{11-13}\cmidrule[1.5pt]{15-17}
                 \multirow{3}{*}{BST} & $bs_1$ & U & 0.04 $|$ 0.03  & 64 $|$ 63  & & U & 0.29  $|$ 0.24   & 70  $|$ 68 & & U & 0.31  $|$ 0.30  & 69  $|$ 68  & & U & 0.25 $|$ 0.25 & 69 $|$ 69 
                 \\\cline{3-5}\cline{7-9}\cline{11-13}\cline{15-17}
                 & $bs_2$ & 2 $|$ 2 & 0.04 $|$ 0.04 & 62 $|$ 64 & & 2 $|$ 2 & 0.04 $|$ 0.04 & 62 $|$ 62 & & 2 $|$ 2 & 0.04 $|$ 0.04 & 64 $|$ 64 & & 2 $|$ 2 & 0.04 $|$ 0.04 & 64$|$ 64
                 \\\cline{3-5}\cline{7-9}\cline{11-13}\cline{15-17}
                 & $bs_3$ & U& 0.02 $|$ 0.02 & 62 $|$ 62 & & 5 $|$ 5 & 0.4 $|$ 0.9 & 70 $|$ 73 & & 5 $|$ 5 & 0.39 $|$ 0.85 & 70 $|$ 74 & & 5 $|$ 5 & 0.40 $|$0.70 & 70 $|$ 72
                 \\\cmidrule[1.5pt]{1-3}\cmidrule[1.5pt]{3-5}\cmidrule[1.5pt]{7-9}\cmidrule[1.5pt]{11-13}\cmidrule[1.5pt]{15-17}
            \end{tabular}
            } 
    \caption{
    {\footnotesize 
    Run-time performance for four case studies and 18 properties. We record the outcome (out.) of the algorithm with ($\opt$) or without ($\noopt$) the optimal solution guarantee: UNSAT (U), bounded-UNSAT (b-U), or the volume of the counterexample $\sigma$ (a natural number, corresponding to vol$_\sigma$). We consider four different configurations: small (see Tab.~\ref{tab:models}), medium (x$10$), big (x$100$), and unbounded ($\infty$) data domain constraints and volume bound. 
    Volume differences between $\opt$ and $\noopt$ are bolded. 
     }\vspace*{-2em}}
    \label{tab:evalc}
\end{table}

\vskip 0.05in
\noindent
{\bf Results of the first experiment} are summarized in Tbl.~\ref{tab:evalc}. 
Out of the 72 trials, our approach found 31 solutions.  It also returned five bounded-UNSAT answers, and 36 UNSAT answers.
The results show that our approach is effective in checking satisfiability of case studies with different sizes. 
More precisely, we observe that it takes under three seconds to return UNSAT and between .04 seconds ($bs_2$:medium) and 32 minutes ($ph_7$:medium:$\opt$) to return a solution.
In the worst case, $\opt$ took 32 minutes for checking $ph_7$ where the property and requirements contain complex constraints.   
Effectively, $ph_7$ requires the deletion of data stored at id 10, while the cost of deletion increases over time under PHIM's requirements.  
Therefore, the user has to perform a number of actions to obtain a sufficient balance to delete the data. Additionally, each action that increases the user's balance has its own preconditions, effects, and time cost, making the process of choosing the sequence of actions to meet the increasing deletion cost non-trivial.

We can see a difference in time between cf2 `big' and `unbounded', this is because the domain expansion followed two different paths and one produces significantly easier SMT queries. Since our approach is guided by counterexamples (i.e., the path is guided by the solution from the SMT solver (Alg.1-L:\ref{ln:minisolution})), our approach does not have direct control over the exact path selection. In future work, we aim to add optimizations to avoid/backtrack from hard paths.

We observe that the data-domain constraint and volume bound used in different configurations do not affect the performance of $\mainAlg$ when the satisfiability of the instances does not depend on them, which is the case for all the instances except for $ph_{6-7}$:small, $cf_{1-3}$:small, and $bs_3$:small. 
As mentioned in Sec.~\ref{sec:approach}, the data-domain constraint ensures that satisfying solutions have realistic data values. For $ph1-ph4$, the bound used in the small, medium and large configurations creates additional constraints in the SMT queries for each relational object, and therefore results in a larger peak memory than the unbounded configuration.

Finding the optimal solution (by $\opt$), in contrast to finding a satisfying solution without the optimal guarantee (by $\noopt$), imposes a substantial computational cost while rarely achieving a volume reduction. The non-optimal heuristic $\noopt$ often outperformed the optimal approach for satisfiable instances. Out of 31 satisfiable instances, $\noopt$ solved 12 instances 3 times faster, 10 instances 10 times faster and seven instances 20 times faster than $\opt$. 
Compared to the non-optimal solution, the optimal solution reduced the volume for only two instances: $ph_7$:large and  $ph_7$:unbounded by one (3\%) and three (9\%), respectively.
On all other satisfying instances, $\opt$ and $\noopt$ both find the optimal solutions. When there is no solution, both $\opt$ and $\noopt$ are equally efficient.

\vskip 0.05in
\noindent
{\bf Results of the second experiment}
are summarized in Tbl.~\ref{tab:effectiveness-2}. Our approach and nuXmv both correctly verified that all six properties were UNSAT in both NASA configurations. 
We observe that the performance of our approach is comparable to nuXmv for the first configuration with .10 to .20 seconds of difference on average. Yet, for the second configuration, our approach terminates in less than 0.20 sec and nuXmv takes 1.50 seconds on average. 
We conclude that our approach's performance is comparable to that of nuXmv for LTL satisfiability checking even though our approach is not specifically designed for LTL. 

\vskip 0.05in
\noindent
{\bf Summary.}
In summary, 
we have demonstrated that our approach is effective at determining the bounded satisfiability of \mfotl formulas using case studies with different sizes and from different application domains. 
When restricted to LTL, our approach is at least as effective as the existing work on LTL satisfiability checking which uses a state-of-the-art symbolic model checker. 
Importantly, $\mainAlg$ can often determine satisfiability of instances without reaching the volume bound, and its performance is not sensitive to the data domain. 
On the other hand, $\mainAlg$'s optimal guarantee imposes a substantial computational cost while rarely achieving a volume reduction over non-optimal solutions obtained by $\noopt$. 
We need to investigate the trade-off between optimality and efficiency, as well as evaluate the performance of $\mainAlg$ on a broader range of benchmarks.

\section{Related Work}
\label{sec:relatedwork}

Below, we compare with the existing approaches that address the satisfiability checking of temporal logic and first-order logic. 

%

%
%

\vskip 0.05in
\noindent
{\bf Satisfiability checking of temporal properties.} Temporal logic satisfiability checking has been studied for the verification of system  designs. Satisfiability checking for Linear Temporal Logic (LTL) can be performed by reducing the problem  to model checking~\cite{Rozier-et-al-07}, by applying automata-based techniques~\cite{Li-et-al-13}, or by SAT solving~\cite{Li-et-al-20,DBLP:conf/aaai/LiRPZV19,DBLP:journals/logcom/LiP0VH18,DBLP:journals/japll/BersaniFMPRP14}.  Satisfiability checking for metric temporal logic (MTL)~\cite{DBLP:journals/tosem/PradellaMP13} and its variants, e.g., mission-time LTL~\cite{Li-et-al-19} and signal temporal logic~\cite{DBLP:journals/pacmpl/BaeL19}, has been studied 
for the verification of real-time system designs. These existing techniques are inadequate for our needs: LTL and MTL cannot effectively capture quantified data constraints commonly used in legal properties.  MFOTL does not have such a limitation as it extends MTL and LTL with first-order quantifiers, thereby supporting the specification of data constraints.
\vskip 0.05in
\noindent
{\bf Finite model finding for first-order logic.} 
Finite-model finders~\cite{claessen2003new,DBLP:conf/cade/ReynoldsTGKDB13} look for a model by checking universal quantifiers exhaustively over candidate models with progressively larger domains; we look for finite-volume solutions using
a similar approach.  
On the other hand,
we consider an explicit bound on the volume of the solution, and are able to find the solution with the smallest volume. SMT solvers support quantifiers with quantifier instantiation heuristics~\cite{DBLP:conf/cav/GeM09,DBLP:conf/cade/GeBT07} such as E-matching ~\cite{DBLP:journals/jacm/DetlefsNS05,DBLP:conf/cade/MouraB07} and conflict-based instantiation~\cite{DBLP:conf/fmcad/ReynoldsTM14}. Quantifier instantiation heuristics are nonetheless generally incomplete, whereas, in our approach, we obtain completeness by bounding the volume of the satisfying solution.

\section{Conclusion}
\label{sec:conclusion}
In this paper, we proposed an incremental bounded satisfiability checking approach, called $\mainAlg$, aimed to enable verification of legal properties, expressed in \mfotl, against system requirements.
$\mainAlg$ first translates \mfotl formulas to first-order logic with relational objects (\tfol) and then searches for a satisfying solution to the translated \tfol formulas in a bounded search space by deriving over- and under-approximating SMT queries.  
$\mainAlg$ starts with a small search space and incrementally expands it until an answer is returned or until the bound is exceeded.
We implemented $\mainAlg$ on top of the SMT solver Z3.  Experiments using five case studies  showed that our approach is 
effective for identifying errors in requirements from different application domains.
%
Our approach is currently limited to verifying safety properties. In the future, we plan to extend our approach so that it can handle a broader spectrum of  property types, including liveness and fairness.
$\mainAlg$'s performance and scalability depend crucially on how the domain of relational objects is maintained and expanded. As future work, we would like to study the effectiveness of other heuristics to improve $\mainAlg$'s scalability (e.g., random restart and expansion with domain-specific heuristics). We also aim to study how to learn/infer \mfotl properties during search to further improve the efficiency of our approach.

\bibliographystyle{splncs04}
\bibliography{reference}

\begin{thebibliography}{10}
\providecommand{\url}[1]{\texttt{#1}}
\providecommand{\urlprefix}{URL }
\providecommand{\doi}[1]{https://doi.org/#1}

\bibitem{Arfelt-Basin-Debois-19}
Arfelt, E., Basin, D.A., Debois, S.: {Monitoring the {GDPR}}. In: Computer
  Security - {ESORICS} 2019 - 24th European Symposium on Research in Computer
  Security, Luxembourg, September 23-27, 2019, Proceedings, Part {I}. LNCS,
  vol. 11735, pp. 681--699. Springer (2019).
  \doi{10.1007/978-3-030-29959-0\_33}

\bibitem{DBLP:journals/pacmpl/BaeL19}
Bae, K., Lee, J.: {Bounded model checking of signal temporal logic properties
  using syntactic separation}. Proc. {ACM} Program. Lang.  \textbf{3}({POPL}),
  51:1--51:30 (2019). \doi{10.1145/3290364}

\bibitem{Basin-10}
Basin, D.A., Klaedtke, F., M{\"{u}}ller, S.: {Policy Monitoring in First-Order
  Temporal Logic}. In: Proceedings of the 22nd International Conference on
  Computer Aided Verification {CAV'2010}, Edinburgh, UK. LNCS, vol.~6174, pp.
  1--18. Springer (2010). \doi{10.1007/978-3-642-14295-6\_1}

\bibitem{Basin-15}
Basin, D.A., Klaedtke, F., M{\"{u}}ller, S., Zalinescu, E.: {Monitoring Metric
  First-Order Temporal Properties}. J. {ACM}  \textbf{62}(2),  15:1--15:45
  (2015). \doi{10.1145/2699444}

\bibitem{DBLP:journals/japll/BersaniFMPRP14}
Bersani, M.M., Frigeri, A., Morzenti, A., Pradella, M., Rossi, M., Pietro,
  P.S.: {Constraint {LTL} satisfiability checking without automata}. J. Appl.
  Log.  \textbf{12}(4),  522--557 (2014). \doi{10.1016/j.jal.2014.07.005}

\bibitem{DBLP:conf/cav/CavadaCDGMMMRT14}
Cavada, R., Cimatti, A., Dorigatti, M., Griggio, A., Mariotti, A., Micheli, A.,
  Mover, S., Roveri, M., Tonetta, S.: {The nuXmv Symbolic Model Checker}. In:
  CAV. pp. 334--342 (2014)

\bibitem{claessen2003new}
Claessen, K., S{\"o}rensson, N.: {New techniques that improve MACE-style finite
  model finding}. In: Proceedings of the CADE-19 Workshop: Model
  Computation-Principles, Algorithms, Applications. pp. 11--27. Citeseer (2003)

\bibitem{Leonardo-Nikolaj-2008}
De~Moura, L., Bj{\o}rner, N.: {Z3: An efficient SMT solver}. In: International
  conference on Tools and Algorithms for the Construction and Analysis of
  Systems. pp. 337--340. Springer (2008)

\bibitem{DBLP:journals/jacm/DetlefsNS05}
Detlefs, D., Nelson, G., Saxe, J.B.: {Simplify: a theorem prover for program
  checking}. J. {ACM}  \textbf{52}(3),  365--473 (2005).
  \doi{10.1145/1066100.1066102}

\bibitem{Feng-Marsso-Garavel-21}
Feng, N., Marsso, L., Garavel, H.: {Health Record}. Model checking context
  model ({MCC}'21), Dept. of Computer Science - University of Toronto (2021),
  \url{https://mcc.lip6.fr/pdf/HealthRecord-form.pdf}

\bibitem{tacas-supp}
Feng, N., Marsso, L., Sabetzadeh, M., Chechik, M.: {Supplementary Material for:
  Early Verification of Legal Compliance via Bounded Satisfiability Checking}
  (2023), \url{https://github.com/agithubuserseva/IBSC}

\bibitem{Garavel-Graf-13}
Garavel, H., Graf, S.: Formal Methods for Safe and Secure Computers Systems.
  Altros (2013)

\bibitem{pysmt2015}
Gario, M., Micheli, A.: Pysmt: a solver-agnostic library for fast prototyping
  of smt-based algorithms. In: SMT Workshop 2015 (2015)

\bibitem{Gastin-Moro-Zeitoune-04}
Gastin, P., Moro, P., Zeitoun, M.: {Minimization of Counterexamples in {SPIN}}.
  In: Proceedings of the 11th International Workshop on Model Checking Software
  ({SPIN}'04), Barcelona, Spain. LNCS, vol.~2989, pp. 92--108. Springer (2004).
  \doi{10.1007/978-3-540-24732-6\_7}

\bibitem{DBLP:conf/cade/GeBT07}
Ge, Y., Barrett, C.W., Tinelli, C.: {Solving Quantified Verification Conditions
  Using Satisfiability Modulo Theories}. In: Proceedings of the 21st
  International Conference on Automated Deduction ({CADE}'2007), Bremen,
  Germany. LNCS, vol.~4603, pp. 167--182. Springer (2007).
  \doi{10.1007/978-3-540-73595-3\_12}

\bibitem{DBLP:conf/cav/GeM09}
Ge, Y., de~Moura, L.M.: {Complete Instantiation for Quantified Formulas in
  Satisfiabiliby Modulo Theories}. In: Proceedings of the 21st International
  Conference on Computer Aided Verification ({CAV}'2009), Grenoble, France.
  LNCS, vol.~5643, pp. 306--320. Springer (2009).
  \doi{10.1007/978-3-642-02658-4\_25}

\bibitem{Halle-12}
Hall{\'{e}}, S., Villemaire, R.: {Runtime Enforcement of Web Service Message
  Contracts with Data}. {IEEE} Trans. Serv. Comput.  \textbf{5}(2),  192--206
  (2012). \doi{10.1109/TSC.2011.10}

\bibitem{DBLP:conf/esorics/HubletBK22}
Hublet, F., Basin, D.A., Krstic, S.: Real-time policy enforcement with metric
  first-order temporal logic. In: Atluri, V., Pietro, R.D., Jensen, C.D., Meng,
  W. (eds.) Computer Security - {ESORICS} 2022 - 27th European Symposium on
  Research in Computer Security, Copenhagen, Denmark, September 26-30, 2022,
  Proceedings, Part {II}. Lecture Notes in Computer Science, vol. 13555, pp.
  211--232. Springer (2022). \doi{10.1007/978-3-031-17146-8\_11},
  \url{https://doi.org/10.1007/978-3-031-17146-8\_11}

\bibitem{phipa-2004}
{Legislative Assembly of Ontario}: {Personal Health Information Protection Act
  ({PHIPA})} (2004), \newline\url{https://www.ontario.ca/laws/statute/04p03}

\bibitem{DBLP:journals/logcom/LiP0VH18}
Li, J., Pu, G., Zhang, L., Vardi, M.Y., He, J.: {Accelerating {LTL}
  satisfiability checking by {SAT} solvers}. J. Log. Comput.  \textbf{28}(6),
  1011--1030 (2018), \url{https://doi.org/10.1093/logcom/exy013}

\bibitem{Li-et-al-20}
Li, J., Pu, G., Zhang, Y., Vardi, M.Y., Rozier, K.Y.: {SAT-based explicit LTLf
  satisfiability checking}. Artif. Intell.  \textbf{289},  103369 (2020).
  \doi{10.1016/j.artint.2020.103369}

\bibitem{DBLP:conf/aaai/LiRPZV19}
Li, J., Rozier, K.Y., Pu, G., Zhang, Y., Vardi, M.Y.: {SAT-Based Explicit LTLf
  Satisfiability Checking}. In: The Thirty-Third {AAAI} Conference on
  Artificial Intelligence, {AAAI} 2019, The Thirty-First Innovative
  Applications of Artificial Intelligence Conference, {IAAI} 2019, The Ninth
  {AAAI} Symposium on Educational Advances in Artificial Intelligence, {EAAI}
  2019, Honolulu, Hawaii, USA, January 27 - February 1, 2019. pp. 2946--2953.
  {AAAI} Press (2019), \url{https://doi.org/10.1609/aaai.v33i01.33012946}

\bibitem{Li-et-al-19}
Li, J., Vardi, M.Y., Rozier, K.Y.: {Satisfiability Checking for Mission-Time
  {LTL}}. In: Computer Aided Verification - 31st International Conference,
  {CAV} 2019, New York City, NY, USA, July 15-18, 2019, Proceedings, Part {II}.
  LNCS, vol. 11562, pp. 3--22. Springer (2019).
  \doi{10.1007/978-3-030-25543-5\_1}

\bibitem{Li-et-al-13}
Li, J., Zhang, L., Pu, G., Vardi, M.Y., He, J.: {{LTL} Satisfiability Checking
  Revisited}. In: Proceedings of the 20th International Symposium on Temporal
  Representation and Reasoning, Pensacola, FL, USA, 2013. pp. 91--98. {IEEE}
  Computer Society (2013). \doi{10.1109/TIME.2013.19}

\bibitem{Mattarei-15}
Mattarei, C., Cimatti, A., Gario, M., Tonetta, S., Rozier, K.Y.: {Comparing
  Different Functional Allocations in Automated Air Traffic Control Design}.
  In: Formal Methods in Computer-Aided Design ({FMCAD}'2015), Austin, Texas,
  USA. pp. 112--119. {IEEE} (2015)

\bibitem{DBLP:conf/cade/MouraB07}
de~Moura, L.M., Bj{\o}rner, N.: {Efficient E-Matching for {SMT} Solvers}. In:
  Proceedings of the 21st International Conference on Automated Deduction
  ({CADE}'2007), Bremen, Germany. LNCS, vol.~4603, pp. 183--198. Springer
  (2007). \doi{10.1007/978-3-540-73595-3\_13}

\bibitem{Narodytska-et-al-14}
Narodytska, N., Bacchus, F.: {Maximum Satisfiability Using Core-Guided MaxSAT
  Resolution}. In: Proceedings of the 28th International Conference on
  Artificial Intelligence ({AAAI}'14), Qu{\'{e}}bec City, Canada. pp.
  2717--2723. {AAAI} Press (2014),
  \url{http://www.aaai.org/ocs/index.php/AAAI/AAAI14/paper/view/8513}

\bibitem{DBLP:journals/jacm/Papadimitriou81}
Papadimitriou, C.H.: {On the complexity of integer programming}. J. {ACM}
  \textbf{28}(4),  765--768 (1981). \doi{10.1145/322276.322287}

\bibitem{DBLP:conf/cade/PassmoreCIABKKM20}
Passmore, G.O., Cruanes, S., Ignatovich, D., Aitken, D., Bray, M., Kagan, E.,
  Kanishev, K., Maclean, E., Mometto, N.: The imandra automated reasoning
  system (system description). In: Peltier, N., Sofronie{-}Stokkermans, V.
  (eds.) Automated Reasoning - 10th International Joint Conference, {IJCAR}
  2020, Paris, France, July 1-4, 2020, Proceedings, Part {II}. Lecture Notes in
  Computer Science, vol. 12167, pp. 464--471. Springer (2020).
  \doi{10.1007/978-3-030-51054-1\_30},
  \url{https://doi.org/10.1007/978-3-030-51054-1\_30}

\bibitem{post-47}
Post, E.L.: {Recursive Unsolvability of a Problem of Thue}. J. Symb. Log.
  \textbf{12}(1),  1--11 (1947). \doi{10.2307/2267170}

\bibitem{DBLP:journals/tosem/PradellaMP13}
Pradella, M., Morzenti, A., {San Pietro}, P.: Bounded satisfiability checking
  of metric temporal logic specifications. {ACM} Trans. Softw. Eng. Methodol.
  \textbf{22}(3),  20:1--20:54 (2013). \doi{10.1145/2491509.2491514}

\bibitem{DBLP:conf/cade/ReynoldsTGKDB13}
Reynolds, A., Tinelli, C., Goel, A., Krstic, S., Deters, M., Barrett, C.W.:
  {Quantifier Instantiation Techniques for Finite Model Finding in {SMT}}. In:
  Automated Deduction - {CADE-24} - 24th International Conference on Automated
  Deduction, Lake Placid, NY, USA, June 9-14, 2013. Proceedings. LNCS,
  vol.~7898, pp. 377--391. Springer (2013). \doi{10.1007/978-3-642-38574-2\_26}

\bibitem{DBLP:conf/fmcad/ReynoldsTM14}
Reynolds, A., Tinelli, C., de~Moura, L.M.: {Finding Conflicting Instances of
  Quantified Formulas in {SMT}}. In: Formal Methods in Computer-Aided Design
  ({FMCAD}'2014), Lausanne, Switzerland. pp. 195--202. {IEEE} (2014).
  \doi{10.1109/FMCAD.2014.6987613}

\bibitem{Rozier-et-al-07}
Rozier, K.Y., Vardi, M.Y.: {{LTL} Satisfiability Checking}. In: Proceedings of
  the 14th International Workshop on Model Checking Software (SPIN'07), Berlin,
  Germany. LNCS, vol.~4595, pp. 149--167. Springer (2007).
  \doi{10.1007/978-3-540-73370-6\_11}

\bibitem{Shan-et-al-19}
Shan, L., Sangchoolie, B., Folkesson, P., Vinter, J., Schoitsch, E., Loiseaux,
  C.: {A Survey on the Application of Safety, Security, and Privacy Standards
  for Dependable Systems}. In: Proceedings of the 15th European Dependable
  Computing Conference ({EDCC}'2019), Naples, Italy. pp. 71--72. {IEEE} (2019).
  \doi{10.1109/EDCC.2019.00023}

\bibitem{DBLP:conf/sas/SuterKK11}
Suter, P., K{\"{o}}ksal, A.S., Kuncak, V.: Satisfiability modulo recursive
  programs. In: Yahav, E. (ed.) Static Analysis - 18th International Symposium,
  {SAS} 2011, Venice, Italy, September 14-16, 2011. Proceedings. Lecture Notes
  in Computer Science, vol.~6887, pp. 298--315. Springer (2011).
  \doi{10.1007/978-3-642-23702-7\_23},
  \url{https://doi.org/10.1007/978-3-642-23702-7\_23}

\end{thebibliography}

\clearpage
\appendix

\section*{Appendix}
Sec.~\ref{appendix:over} provides the correctness proof for the constructions of over- and under-approximation queries; Sec.~\ref{ap:SoundTerminateOpt} 
studies its correctness (Th.~\ref{thm:soundness}), termination (Th.~\ref{thm:termination}) and optimality (Th.~\ref{Thm:optimality}).

\section{Correctness Proof of Over- and Under- Approximation}
\label{appendix:over}
In this section, we prove the correctness of the over and under-approximation (Lemma~\ref{lemma:overapprox} and Lemma~\ref{lemma:underapprox}).

\begin{proposition} \label{prop:QFfree}
For every \tfol formula $\phi_f$ and domain $\domainUnder$,  the grounded formula $\phi_g = \groundAlg (\phi_f, \domainUnder)$ is quantifier-free and contains a finite number of variables and terms.
\end{proposition}

\begin{proof}
We note that (1) quantifiers are limited only to relational objects for \tfol formula $\phi_f$, and they are eliminated by $\groundAlg$; (2) since the number of a relational objects in the domain $\domainUnder$ is finite, each $\forall$ is expanded into conjunctions of a finite number of terms; (3) finally, since the formula $\phi_f$ is finite and does not contain cyclic reference, the number of times that $\groundAlg$ is invoked during $\groundAlg(\phi_f)$ is always finite. Combining (1), (2) and (3), we obtain that $\phi_g$ is quantifier-free and contains a finite number of variables and terms. 
\end{proof}

We now present proof of correctness for the over-approximation (Lemma~\ref{lemma:overapprox})
\begin{proof} [Proof of Lemma~\ref{lemma:overapprox}]
Suppose $\phi_g$ is UNSAT but there exists a solution $v_f$ for $\phi_f$ in some domain $\domain$ ($\domain$ may be different from $\domainUnder$). We show that we can always construct a solution $v_g$ that satisfies $\phi_g$, which causes a contradiction. First, we construct a solution $v_g'$ for $\phi_g' = \groundAlg (\phi_f, \domain)$ from the solution $v_f$ (for $\phi_f$). Then, we construct a solution $v_g$ for $\phi_g$ from the solution $v_g'$ for $\phi_g'$.

We can construct a solution $v_g'$ for $\phi_g'$ in $\domain \cup NewRs$ where $NewRs$ are the new relational objects added by $\groundAlg$. The encoding of $\groundAlg$ uses the standard way for grounding universally quantified expression by enumerating every relational object in $\domain$ (L:\ref{ln:forall}). For every existentially quantified expression, there exists some relation object $o \in \domain$ enabled by $v_f$ (i.e., $v_f(o.ext) = \top$) that satisfies the expression in $\phi_f$, whereas $\phi_g'$ contains a new relational object $o' \in NewRs$ for satisfying the same expression (L:\ref{ln:extreplacement}). Let $v_f(o) = v_g'(o')$ for $o$ and $o'$, and then $v_g'$ is a solution to $\phi_g'$.

To construct the solution $v_g$ for $\phi_g = \groundAlg(\phi_f, \domainUnder)$ from the solution $v_g'$ for $\phi_g' = \groundAlg (\phi_f, \domain)$, we consider expansion of the universally quantified expression in $\phi_f$ (L:\ref{ln:unidiscovery}). For every relational objects in $o^{+} \in \domain \setminus \domainUnder$, $\groundAlg$ creates constraints (L:\ref{ln:forall}) in $\phi_g'$, but not in $\phi_g$. On the other hand, for every relational object in $o^{-} \in \domainUnder \setminus \domain$, we disable $o^{-}$ in the solution $v_g$ by assigning $o_g(r^{-}.ext) \gets \bot$. Therefore, the constraints instantiated by $o^{-}$ (at L:\ref{ln:forall}) in $\phi_g$ are vacuously satisfied. 

For every relational object $o \in \domainUnder \cap \domain$, we let $v_g(o) = v_g'(o)$, and all shared constraints in $\phi_g$ and $\phi_g'$ are satisfied by $v_g$ and $v_g'$, respectively.  Therefore, $v_g$ is a solution to $\phi_g$. Contradiction.
\end{proof}

We now present proof of correctness for the over-approximation (Lemma~\ref{lemma:underapprox})
\begin{proof}[Proof of Lemma~\ref{lemma:underapprox}]
If $\sigma$ is a solution to $\phi_g^{\bot}$ in the domain  $\domainUnder \cup  NewRs$, then we can construct a solution $\sigma'$ to $\phi_f$ in the domain $\domainUnder$. The construction of $\sigma'$ simply ignores any relational object in $\sigma$ that does not appear in $\domainUnder$ (i.e., the ones in $NewRs$). The solution $\sigma'$ is valid for $\phi_f$ in $\domainUnder$ because for every ignored relational object $o$,   $\textsc{NoNewR}(NewRs, \domainUnder)$ guarantees that some relational object $o' \in \domainUnder$ is semantically equivalent to $o$. Therefore, if an existentially quantified expression is satisfied by $o$,  it is also satisfied by $o'$. On the other hand, universally quantified expression in $\phi_g^{\bot}$ are grounded by considering only $\domainUnder$  (L:\ref{ln:forall} of Alg.~\ref{alg:ground}), and hence $\sigma'$ satisfies them. Therefore, $\sigma'$ is a solution to $\phi_f$ in $\domainUnder$. 
\end{proof}

\section{Correctness, Termination, Optimality of \mainAlg}
\label{ap:SoundTerminateOpt}
In this section, we prove that algorithm $\mainAlg$ is correct and optimal, i.e., always finds a solution with a minimum volume.  We also show that $\mainAlg$ terminates.  

 \begin{theorem} [Soundness] \label{thm:soundness}
 If the algorithm $\mainAlg$ terminates on input $P$, $\reqs$ and $\bound$, then it returns the correct result, i.e., a counter-example $\sigma$,  ``UNSAT'' or ``bounded-UNSAT'', when they apply. 
 \end{theorem}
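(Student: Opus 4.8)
The plan is to prove soundness by a case analysis on the three statements at which $\mainAlg$ can terminate and return: the UNSAT return (L:\ref{ln:returnUNSAT}), the counterexample return (L:\ref{ln:returnSAT}), and the bounded-UNSAT return (L:\ref{ln:returnboundedUNSAT}). Because the theorem is conditioned on termination, it suffices to verify that whichever branch fires produces the correct verdict for the target formula $\Phi := \phi_f \wedge \bigwedge_{\psi_f \in \reqs_f} \psi_f$ (conjoined with $\dataDomainConstraint$). First I would record two loop invariants maintained at the head of the \textbf{while} loop (LL:\ref{ln:loopStart}--\ref{ln:loopEnd}): $\reqsUnder \subseteq \reqs_f$, since every lesson added at L:\ref{ln:ruleRefinement} is a genuine requirement, and $\domainUnder \subseteq D_{AU}$, since every object added at L:\ref{ln:expandingdomain} comes from a solution of the over-approximation. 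The formula actually grounded each iteration is $\phi = \phi_f \wedge \reqsUnder$, which is entailed by $\Phi$; this monotonicity is the bridge that lets verdicts about $\phi$ transfer to $\Phi$.

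For the UNSAT branch, termination at L:\ref{ln:returnUNSAT} requires $\textsc{Solve}(\phi_g \wedge \dataDomainConstraint)=\mathrm{UNSAT}$, where $\phi_g = \groundAlg(\phi, \domainUnder)$. By Lemma~\ref{lemma:overapprox}, $\phi_g$ over-approximates $\phi$, so $\phi \wedge \dataDomainConstraint$ is UNSAT; since $\Phi$ entails $\phi$, the formula $\Phi \wedge \dataDomainConstraint$ is UNSAT as well, and reporting UNSAT is correct regardless of $\bound$. For the counterexample branch, reaching L:\ref{ln:returnSAT} means the under-approximation query $\phi_g^{\bot}$ admitted a solution $\sigma$ (L:\ref{ln:solve2}) and $\sigma \models \psi_f$ for every $\psi_f \in \reqs_f$ was explicitly verified (L:\ref{ln:ReqCheck}). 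Lemma~\ref{lemma:underapprox} promotes $\sigma$ to a genuine solution of $\phi_f \wedge \reqsUnder$, and the explicit re-check over the full $\reqs_f$ upgrades this to $\sigma \models \Phi$, so $\sigma$ is a bona fide counterexample. To close this case I still owe the volume bound, i.e.\ $vol(\sigma) \le \bound$; I would obtain it from the invariant that $\domainUnder$ is enlarged only after the check at L:\ref{ln:exceedsLimit} has certified the current minimal volume to be within $\bound$. Making this precise for the particular model returned by the solver (so that it does not enable more objects than were certified) requires tying the under-approximation solution back to the minimized over-approximation, and is the lighter of the two bookkeeping concerns.

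The bounded-UNSAT branch is the crux and the main obstacle. Termination at L:\ref{ln:returnboundedUNSAT} fires when the minimal over-approximation solution computed at L:\ref{ln:minisolution} satisfies $vol(\sigma_{min}) > \bound$, and the correct verdict is that no solution of $\Phi$ has volume at most $\bound$. The key claim I must establish is that $vol(\sigma_{min})$ is a genuine \emph{lower bound} on the volume of any solution of $\phi$: since $\phi_g$ over-approximates $\phi$ (Lemma~\ref{lemma:overapprox}), every solution of $\phi$ induces a solution of $\phi_g$ using the same enabled relational objects, so by minimality of $\sigma_{min}$ no solution of $\phi$ can use fewer than $vol(\sigma_{min})$ objects; Cor.~\ref{cor:DomainExpansion} supplies the structural reason that the missing objects are precisely those that force domain expansion. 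The delicate point is justifying that the $\textsc{Minimize}$/MaxRes procedure at L:\ref{ln:minisolution} truly returns a global minimum-volume model of $\phi_g$ rather than a merely locally minimal one, since the entire lower-bound argument rests on this. Once it is in hand, $vol(\sigma_{min}) > \bound$ forces every solution of $\phi$---and hence every solution of the stronger $\Phi$---to exceed $\bound$, so bounded-UNSAT is correct.

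Finally I would assemble the three cases: the loop exits only through one of these returns, and each is individually sound, so any terminating run of $\mainAlg$ yields a correct verdict. I expect the UNSAT and counterexample-satisfaction arguments to be routine (direct appeals to Lemmas~\ref{lemma:overapprox} and~\ref{lemma:underapprox} plus the $\reqsUnder \subseteq \reqs_f$ monotonicity), and the genuinely load-bearing step to be the lower-bound property of $vol(\sigma_{min})$ underlying bounded-UNSAT, whose correctness hinges on the global-minimality guarantee of the MaxRes-based minimization combined with the over-approximation semantics.
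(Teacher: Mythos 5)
Your proposal matches the paper's proof essentially step for step: the same three-way case analysis on the return statements, using Lemma~\ref{lemma:overapprox} together with the monotonicity $\reqsUnder \subseteq \reqs_f$ for the UNSAT branch, Lemma~\ref{lemma:underapprox} plus the explicit recheck against all of $\reqs_f$ for the counterexample branch, and the minimality of $\sigma_{min}$ under the over-approximation semantics for the bounded-UNSAT branch. The two extra obligations you flag are not discharged in the paper's soundness proof either---the volume bound on the returned $\sigma$ is deferred to Theorem~\ref{Thm:optimality} (which shows $vol(\sigma) = vol(\sigma_{min})$), and the global minimality of \textsc{Minimize} is taken as its specification rather than proved---so your argument is, if anything, slightly more explicit about the same dependencies.
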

 
 \noindent
   \textbf{Proof.}
    Let $\phi_f$ be the \tfol formula $T(\neg P)\bigwedge_{\psi \in \reqs} T(\psi)$.   We consider correctness of $\mainAlg$ for three possible outputs: the satisfying solution $\sigma$ to $\phi_f$ (L:\ref{ln:returnSAT}), the UNSAT determination of $\phi_f$ (L:\ref{ln:returnUNSAT}), and the bounded-UNSAT determination of $\phi_f$ (L:\ref{ln:returnboundedUNSAT}). 
         $\mainAlg$ returns a satisfying solution $\sigma$ only if (1) $\sigma$ is a solution $\phi_{g}^{\bot}$ (L:\ref{ln:underSAT}) and (2) $\sigma \models T(\psi)$ for every $\psi \in Reqs$ (L:\ref{ln:ReqCheck}). By (1) and Lemma~\ref{lemma:underapprox}, $\sigma$ is a solution to $T(\neg P)\bigwedge_{\psi \in \reqsUnder} T(\psi)$. Together with (2), $\sigma$ is a solution to  $\phi_f$.
         $\mainAlg$ returns UNSAT iff $\phi_g$ is UNSAT (L:\ref{ln:overUNSAT}). By Lemma~\ref{lemma:overapprox}, we show $T(\neg P)\bigwedge_{\psi \in \reqsUnder} T(\psi)$ is UNSAT. Since $\reqsUnder \subseteq Reqs$, the original formula $\phi_f$ is also UNSAT. 
      $\mainAlg$ returns bounded-UNSAT iff the volume of the minimum solution $\sigma_{min}$ to the over-approximated query $\phi_g$ is larger than $\bound$ (L:\ref{ln:exceedsLimit}). Since $\phi_g$ is an over-approximation of the original formula $\phi_f$, any solution $\sigma$ to $\phi_f$ has volume at least $vol(\sigma_{min})$. Therefore, when $vol(\sigma_{min}) > \bound$, $vol(\sigma) > \bound$ for every solution.  Finally, by Thm.~\ref{Thm:tcorrectness}, (1) if $\phi_f$ is satisfiable, then $\neg P \wedge \reqs$ is satisfiable, (2) if $\phi_f$ is UNSAT, then $\neg P \wedge \reqs$ is UNSAT, and (3) if $\phi_f$ does not have a solution with volume not less than $\bound$, then $\neg P \wedge \reqs$ also does not have a solution with volume less than $\bound$ (bounded UNSAT). Therefore, Alg.~\ref{alg:main} is sound for MFTOL bounded satisfiability on inputs $P$, $\reqs$ and $\bound$.
     \qed

\begin{theorem} [Termination] \label{thm:termination}
For an input property $P$, requirements $\reqs$, and a bound $\bound \neq \infty$,   $\mainAlg$ eventually terminates. 
 \end{theorem}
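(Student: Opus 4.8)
The plan is to show that the while loop of $\mainAlg$ (LL:\ref{ln:loopStart}--\ref{ln:loopEnd}) can execute only finitely many times when $\bound\neq\infty$. Every iteration ends in exactly one of three ways: it returns an answer (UNSAT, a counterexample, or bounded-UNSAT) and halts; it strictly enlarges $\reqsUnder$ by adding a violated requirement (L:\ref{ln:ruleRefinement}); or it strictly enlarges $\domainUnder$ by inserting the relational objects of a minimal over-approximation solution (L:\ref{ln:expandingdomain}). A terminating iteration ends the run, so it suffices to bound the number of \emph{refinement} iterations and the number of \emph{expansion} iterations separately and then combine them.

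First I would bound the refinements. The set $\reqsUnder$ only ever grows and is always a subset of the finite set $\reqs_f$, so it is enough to show each refinement adds a genuinely new element. The solution $\sigma$ obtained at L:\ref{ln:solve2} satisfies the under-approximation query $\phi_g^{\bot}$, so by Lemma~\ref{lemma:underapprox} it is a genuine solution of the translated property together with all of $\reqsUnder$; hence $\sigma$ satisfies every requirement currently in $\reqsUnder$. Therefore the violated $lesson$ selected at L:\ref{ln:findviolatingrule} cannot already lie in $\reqsUnder$, and adding it (L:\ref{ln:ruleRefinement}) strictly increases $|\reqsUnder|$. Consequently at most $|\reqs_f|$ refinement iterations occur over the whole run.

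Next I would bound the expansions. The key structural fact is monotonicity of the over-approximation: enlarging either $\domainUnder$ or $\reqsUnder$ only causes $\groundAlg$ to emit \emph{additional} conjuncts (more guarded universal instantiations from L:\ref{ln:forall}, or the extra requirement's clauses), so each successive $\phi_g$ logically entails the previous one and the minimal feasible volume $vol(\sigma_{min})$ is non-decreasing across the run. Moreover every expansion strictly enlarges $\domainUnder$: since $\phi_g$ is SAT but $\phi_g^{\bot}$ is UNSAT, Corollary~\ref{cor:DomainExpansion} guarantees that $\sigma_{min}$ contains a relational object absent from $\domainUnder$, which is then inserted. Because the guard at L:\ref{ln:exceedsLimit} returns bounded-UNSAT the moment $vol(\sigma_{min})>\bound$, every non-terminating iteration satisfies $vol(\sigma_{min})\le\bound$. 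As refinements are already capped, it remains only to bound a maximal block of consecutive expansions during which $\reqsUnder$ is held fixed.

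The hard part will be ruling out an infinite block of expansions along which $vol(\sigma_{min})$ sits at a constant plateau $m\le\bound$, since non-decreasingness alone does not forbid a plateau. To close this I would establish a saturation property of the under-approximation: whenever $\domainUnder$ contains the full set of active objects of some volume-$m$ solution of the current $\phi_g$, the query $\phi_g^{\bot}$ must be SAT, because every existential witness can be identified (via $\textsc{NoNewR}$) with an object of $\domainUnder$ while all guarded universal conjuncts range only over $\domainUnder$. Each expansion in the plateau inserts at least one new active object demanded by $\sigma_{min}$, and each minimal solution has only $m\le\bound$ active objects, so after finitely many expansions $\domainUnder$ must come to contain such a complete active set; at that point $\phi_g^{\bot}$ becomes satisfiable and the block ends with a refinement or a returned counterexample. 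The alternative is that the extra universal constraints imposed on the freshly inserted objects force $vol(\sigma_{min})$ strictly upward, which can happen at most $\bound$ times. Either way each block is finite, so the total number of expansions is finite; combined with the $|\reqs_f|$ bound on refinements, $\mainAlg$ terminates whenever $\bound\neq\infty$.
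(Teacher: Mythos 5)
Your high-level decomposition---finitely many refinements of $\reqsUnder$, finitely many expansions of $\domainUnder$, with $vol(\sigma_{min})$ monotonically non-decreasing across iterations---is the same skeleton as the paper's proof, and your refinement bound (each $lesson$ is genuinely new because the solution of $\phi_g^{\bot}$ already satisfies everything in $\reqsUnder$ via Lemma~\ref{lemma:underapprox}, hence at most $|\reqs_f|$ refinements) is the paper's argument with slightly more detail. But there are two problems. The minor one: you never argue that a single iteration terminates. Reducing termination to ``finitely many iterations'' tacitly assumes that \textsc{Solve} (L:\ref{ln:solve2}) and \textsc{Minimize} (L:\ref{ln:minisolution}) return, and the paper devotes a third of its proof to exactly this, invoking Prop.~\ref{prop:quantifier} and Prop.~\ref{prop:QFfree} to conclude that $\phi_g$ and $\phi_g^{\bot}$ are quantifier-free LIA formulas, whose satisfiability is decidable. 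You should state this.

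The major gap is in your plateau analysis. Your saturation property is plausible, but the counting step---``each expansion inserts at least one new active object, each minimal solution has only $m \le \bound$ active objects, so after finitely many expansions $\domainUnder$ must contain a complete active set''---is a non sequitur, because the solution whose active set you are chasing is a moving target. Observe that L:\ref{ln:expandingdomain} inserts \emph{all} of $\sigma_{min}$'s objects at once, so immediately after an expansion $\domainUnder$ already contains the complete active set of a volume-$m$ solution of the \emph{previous} query; if your argument were sound, every plateau would end after a single expansion. It does not, because the next iteration recomputes $\phi_g$ over the enlarged domain (L:\ref{ln:computeOverground}), and the new universal instantiations over the freshly inserted objects (L:\ref{ln:forall} of Alg.~\ref{alg:ground}) can invalidate precisely that solution and force fresh existential witnesses; the new minimal solution may again have volume $m$ but use objects outside $\domainUnder$. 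Your dichotomy ``saturation is reached, or $vol(\sigma_{min})$ strictly increases'' thus omits the dangerous third case: an unbounded run of expansions at constant volume, each sliding to new witnesses as the strengthened query kills the old ones. Ruling this out is the actual crux, and the paper handles it (albeit informally) by a finite-branching argument: each relational object in $\domainUnder$ can introduce only finitely many options for new relational objects through the grounded encoding (e.g., $r \Rightarrow \bigvee_{i=0}^{n} \exists r_i$), so once all options induced by the current domain are exhausted, $vol(\sigma_{min})$ must strictly increase if the algorithm has not terminated; combined with monotonicity, the volume eventually exceeds $\bound$ and bounded-UNSAT is returned (L:\ref{ln:returnboundedUNSAT}). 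That argument implicitly relies on the grounding reusing witness objects across iterations (cf.\ the incremental encoding discussed in Sec.~\ref{subsec:main}); your saturation count is not a substitute for it, so the expansion bound in your proposal does not go through as written.
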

 
\noindent
\textbf{Proof.}
 To prove that $\mainAlg$ always terminates when the input $\bound \neq \infty$,  we need to show that $\mainAlg$  does not get stuck at solving the SMT query via $\textsc{solve}$ (LL:\ref{ln:solve2}-\ref{ln:overUNSAT}),  nor refining $\reqsUnder$ (LL:\ref{ln:ReqCheck}-\ref{ln:ReqCheckFinish}), nor  expanding $\domainUnder$ (LL:\ref{ln:expandingdomain}-\ref{ln:expansionfinish}).
 
 A call to $\textsc{solve}$ (LL:\ref{ln:solve2}-\ref{ln:overUNSAT}) always terminates. By Prop.~\ref{prop:QFfree} both the under- and the over-approximated queries $\phi_g$ and $\phi_{g}^{\bot}$ are quantifier-free. Since the background theory for $P$ is LIA, then $\phi_g$ and $\phi_{g}^{\bot}$ are a quantifier-free LIA formula whose satisfiability is decidable.

 If the requirement checking fails on L:~\ref{ln:ReqCheck},  a violating requirement $lesson$ is added to $\reqsUnder$ (LL:\ref{ln:findviolatingrule}-\ref{ln:ruleRefinement}) which ensures that any future solution $\sigma'$ satisfies $lesson$. Therefore, $lesson$ is never added to $\reqsUnder$ more than once. Given that $\reqs$ is a finite set of \mfotl formulas, at most $|\reqs|$ lessons can be learned before the algorithm terminates.

 The under-approximated domain $\domainUnder$ can be expanded a finite number of times because the size of the minimum solution $vol(\sigma_{min})$ to $\phi_g$  (computed on L:\ref{ln:minisolution}) is monotonically non-decreasing between each iteration of the loop (LL:\ref{ln:loopStart}-\ref{ln:loopEnd}). The size will eventually increase since each relational object in $\domainUnder$ can introduce a finite number of options for adding a new relational object through the grounded encoding of $\phi_g$ on L:\ref{ln:computeUnderground}, e.g., $o.ext \Rightarrow \bigvee_{i=0}^n \exists r_i$. After exploring all options to $\domainUnder$, $vol(\sigma_{min})$ must increase if the algorithm has not already terminated. Therefore, if $\bound \neq \infty$, then eventually $vol(\sigma_{min}) > \bound$, and the algorithm will return bounded-UNSAT instead of expanding $\domainUnder$ indefinitely (LL:\ref{ln:expansionstart}-\ref{ln:expansionfinish}).  \qed

\vskip 0.05in
\noindent
\textbf{Optimality of the solution.}
The following theorem proves that the solution found by $\mainAlg$ has the minimum volume.
 \begin{theorem} [Solution optimality] \label{Thm:optimality}
For a property $P$ and requirements $\reqs$, let $\phi_f$ be the FOL formula  $T(\neg P)\bigwedge_{\psi \in \reqs} T(\psi)$. If $\mainAlg$ finds a solution $\sigma$ for $\phi_f$, then for every $\sigma' \models \phi_f$, $vol(\sigma) \le vol(\sigma')$. 
 \end{theorem}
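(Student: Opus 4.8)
The plan is to sandwich the volume of the returned solution $\sigma$ between a lower bound that holds for \emph{every} model of $\phi_f$ and a matching upper bound achieved by $\sigma$ itself, using the minimum-volume over-approximation model $\sigma_{min}$ (computed by $\textsc{Minimize}$ on L:\ref{ln:minisolution}) as the common value. First I would fix the iteration $T$ at which $\mainAlg$ returns $\sigma$, together with the domain $\domainUnder$ and the learned requirements $\reqsUnder \subseteq \reqs_f$ held at that point. Since $\sigma$ is returned on L:\ref{ln:returnSAT}, it is a model of the under-approximation query $\phi_g^{\bot}$ and satisfies every $\psi \in \reqs_f$; by Lemma~\ref{lemma:underapprox} together with the requirement check, $\sigma \models \phi_f$, so $vol(\sigma)$ is finite and well defined.

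\textbf{Lower bound.} Let $\sigma'$ be an arbitrary model of $\phi_f = T(\phi)\bigwedge_{\psi\in\reqs}T(\psi)$. Because $\reqsUnder \subseteq \reqs_f$, dropping the unlearned requirements only weakens the formula, so $\sigma'$ is also a model of the working formula $T(\phi)\wedge\reqsUnder$ that is grounded at iteration $T$. I would then reuse the model construction from the proof of Lemma~\ref{lemma:overapprox}: from $\sigma'$ one builds a model of the over-approximation $\phi_g = \groundAlg(T(\phi)\wedge\reqsUnder,\domainUnder)$ whose active relational objects are in bijection with the existential witnesses and the used universal instances of $\sigma'$, hence of volume at most $vol(\sigma')$. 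Since $\sigma_{min}$ is a \emph{minimum}-volume model of $\phi_g$, this yields $vol(\sigma_{min}) \le vol(\sigma')$ for every $\sigma' \models \phi_f$.

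\textbf{Achievability.} Here I would show $vol(\sigma) = vol(\sigma_{min})$. One inequality is immediate: $\phi_g^{\bot} \Rightarrow \phi_g$, so $\sigma \models \phi_g$ and therefore $vol(\sigma) \ge vol(\sigma_{min})$. For the reverse I would use the expansion invariant of Alg.~\ref{alg:main}: whenever $\phi_g^{\bot}$ is UNSAT the domain is grown by \emph{all} objects of the current $\sigma_{min}$ (L:\ref{ln:expandingdomain}), so at the returning iteration $\domainUnder$ contains a full copy of a minimum over-approximation model. Consequently $\sigma_{min}$ itself, after identifying each new object in $NewRs$ with its copy in $\domainUnder$ via $\textsc{NoNewR}$, is a model of $\phi_g^{\bot}$ of volume exactly $vol(\sigma_{min})$. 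Combining the two directions with the lower bound gives $vol(\sigma) = vol(\sigma_{min}) \le vol(\sigma')$, which is the claim.

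\textbf{Main obstacle.} The delicate step is the second half of achievability: pinning the volume of the \emph{returned} under-approximation model down to $vol(\sigma_{min})$ rather than merely bounding it below. A model of $\phi_g^{\bot}$ may in principle activate extra objects of $\domainUnder$ that are inactive in $\sigma_{min}$, so I must argue that the under-approximation search returns a minimum-volume model, e.g.\ by solving $\phi_g^{\bot}$ under the same $\textsc{Minimize}$/MaxRes objective used for $\phi_g$, or by observing that the first satisfiable $\phi_g^{\bot}$ is witnessed precisely by the realizable $\sigma_{min}$. A secondary point needing care is the volume-non-increase in the over-approximation construction of the lower bound: I must verify that distinct active objects of the constructed $\phi_g$-model map to distinct active objects of $\sigma'$, so that no volume is spuriously introduced. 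The monotonicity of $vol(\sigma_{min})$ established in the proof of Theorem~\ref{thm:termination} is what guarantees that $\sigma_{min}$ has stabilized at its largest, and thus tight, value by iteration $T$.
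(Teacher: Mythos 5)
Your proposal is correct and follows essentially the same route as the paper's proof: identify the volume of the returned solution with $vol(\sigma_{min})$, the minimum-volume model of the over-approximation $\phi_g$ used at the last domain expansion, and then use the fact that $\phi_g$ over-approximates $\phi_f$ to conclude $vol(\sigma_{min}) \le vol(\sigma')$ for every $\sigma' \models \phi_f$. The ``main obstacle'' you flag---that a model of $\phi_g^{\bot}$ might activate extra objects of $\domainUnder$ beyond those active in $\sigma_{min}$---is exactly the step the paper glosses over with the bare assertion that the returned $\sigma$ has the same number of relational objects as $\sigma_{min}$, so your explicit $\phi_g^{\bot} \Rightarrow \phi_g$ direction and proposed remedies make your treatment, if anything, more careful than the published one.
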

 
\noindent
\textbf{Proof.}
$\mainAlg$ returns a solution $\sigma$ on L:\ref{ln:returnSAT} only if $\sigma$ is a solution to the under-approximation query $\phi_g^{\bot}$ (computed on L:\ref{ln:computeUnderground}) for some domain $\domainUnder \neq \emptyset$. $\domainUnder$ is last expanded in some previous iterations by adding relational objects to the minimum solution $\sigma_{min}$ (L:\ref{ln:minisolution}) of the over-approximation query $\phi_g'$ (L:\ref{ln:expandingdomain}). Therefore, the returned $\sigma $ has the same number of relational objects as $\sigma_{min}$ ($vol(\sigma_{min}) = vol(\sigma)$). Since $\phi_g$ is an over-approximation of the original formula $\phi_f$, any solution $\sigma'$ to  $\phi_f$ has volume that is at least $vol(\sigma_{min})$. Therefore,  $vol(\sigma) \le vol(\sigma')$. Finally, by Thm.~\ref{Thm:tcorrectness}, the optimal solution of $\neg P \wedge \reqs$ has the same volume as $vol(\sigma)$. 
\qed

\end{document}